\newtheorem{definition}{\textbf{Definition}}
\newtheorem{prop}{\textbf{Proposition}}
\newtheorem{theorem}{\textbf{Theorem}}
\newtheorem{lemma}{\textbf{Lemma}}
\newtheorem{corollary}{\textbf{Corollary}}
\title[AAMAS-2023 Formatting Instructions]{A Generic Multi-Player Transformation Algorithm for Solving Large-Scale Zero-Sum Extensive-Form Adversarial Team Games}
\author{Chen Qiu}
\affiliation{
  \institution{Harbin Institute of Technology, ShenZhen}
  \city{Shenzhen}
  \country{China}}
\email{19b951015@stu.hit.edu.cn}
\author{Yulin Wu}
\affiliation{
  \institution{Harbin Institute of Technology, ShenZhen}
  \city{Shenzhen}
  \country{China}}
\email{yulinwu@cs.hitsz.edu.cn}
\author{Weixin Huang}
\affiliation{
  \institution{Harbin Institute of Technology, ShenZhen}
  \city{Shenzhen}
  \country{China}}
\email{21s051015@stu.hit.edu.cn}
\author{Botao Liu}
\affiliation{
  \institution{Harbin Institute of Technology, ShenZhen}
  \city{Shenzhen}
  \country{United Kingdom}}
\email{21s151108@stu.hit.edu.cn}
\author{Shaohuai Shi}
\affiliation{
  \institution{Harbin Institute of Technology, ShenZhen}
  \city{Shenzhen}
  \country{China}}
\email{shaohuais@hit.edu.cn}
\author{Xuan Wang}
\affiliation{
  \institution{Harbin Institute of Technology, ShenZhen}
  \city{Shenzhen}
  \country{China}}
\email{wangxuan@cs.hitsz.edu.cn}
\begin{abstract}

Many recent practical and theoretical breakthroughs focus on adversarial team multi-player games (ATMGs) in \textit{ex ante} correlation scenarios. In this setting, team members are allowed to coordinate their strategies only before the game starts. Although there existing algorithms for solving extensive-form ATMGs, the size of the game tree generated by the previous algorithms grows exponentially with the number of players. Therefore, how to deal with large-scale zero-sum extensive-form ATMGs problems close to the real world is still a significant challenge. In this paper, we propose a generic multi-player transformation algorithm, which can transform any multi-player game tree satisfying the definition of AMTGs into a 2-player game tree, such that finding a team-maxmin equilibrium with correlation (TMECor) in large-scale ATMGs can be transformed into solving NE in 2-player games. To achieve this goal, we first introduce a new structure named \textit{private information pre-branch}, which consists of a temporary chance node and coordinator nodes and aims to make decisions for all potential private information on behalf of the team members. We also show theoretically that NE in the transformed 2-player game is equivalent TMECor in the original multi-player game. This work significantly reduces the growth of action space and nodes from exponential to constant level. This enables our work to outperform all the previous state-of-the-art algorithms in finding a TMECor, with $182.89\times$, $168.47\times$, $694.44\times$, and $233.98\times$ significant improvements in the different Kuhn Poker and Leduc Poker cases (21K3, 21K4, 21K6 and 21L33). In addition, this work first practically solves the ATMGs in a 5-player case which cannot be conducted by existing algorithms.

\end{abstract}
\keywords{Team-maxmin Equilibrium with Correlation, Adversarial Team Games, Multi-Player Games, Nash Equilibrium, Game Theory}
\newcommand{\BibTeX}{\rm B\kern-.05em{\sc i\kern-.025em b}\kern-.08em\TeX}
\begin{document}


\pagestyle{fancy}
\fancyhead{}


\maketitle 


\section{Introduction}
Games have been critical testbeds for exploring how effectively machines can make sophisticated decisions since the early days of computing \cite{bard2020hanabi}. Finding a equilibrium in games has become a significant criterion for evaluating artificial intelligence levels. In recent years, many great results have been obtained from the field of 2-player zero-sum (2p0s) games based on Nash Equilibrium (NE) \cite{nash1951non} in non-complete information environments \cite{zinkevich2007regret, DBLP:conf/icml/BrownLGS19, DBLP:conf/iclr/ZhouRLYZ20}. However, solving equilibrium in multi-player zero-sum games with three or more players remains a tricky challenge. There are three main reasons for this: Firstly,  the CFR-like algorithms for finding NE are widely used in 2p0s games, but no theoretical guarantees are provided in the literature whether they can be directly used in multi-player games \cite{brown2020equilibrium}; Secondly, NEs are not unique in multi-player games, and the independent strategies of each player cannot easily form a unique NE \cite{brown2019superhuman}; Thirdly, computing NEs is PPAD-complete for multi-player zero-sum games \cite{chen20053}.

The Team-maxmin Equilibrium (TME) \cite{von1997team, DBLP:conf/aaai/BasilicoCN017} is a solution concept that can handle multi-player games. In this paper, we are concerned with adversarial team multi-player games. It models a situation in which a team of players shares the same utility function against an adversary. Based on the various forms of correlation between team members, this concept is extended to extensive-games \cite{DBLP:conf/aaai/Celli018}. Notably, we focus on the \textit{ex ante} coordination scenario of team members. More specifically, team members are allowed to coordinate and agree on a common strategy before the game starts, but they cannot communicate during the game. The variant of TME in this scenario is called Team-maxmin Equilibrium with Correlation (TMECor), and its computation is shown to be FNP-hard \cite{hansen2008approximability}. To the best of our knowledge, Celli and Gatti \cite{DBLP:conf/aaai/Celli018} first proposed a linear programming algorithm capable of solving TMECor in 2018. The essence of this algorithm is hybrid column generation, where the team members and adversary use different forms of strategy representation. The Associated Recursive Asynchronous Multiparametric Disaggregation Technique (ARAMDT) proposed by \cite{DBLP:conf/aaai/Zhang020} also uses mixed-integer linear program (MILP) to find the TMECor. Since the feasible solutions of MILP are too large for large-scale games, this significantly limits their ability to efficiently handle large-scale games.

The TMECor can be considered as an NE between the team and the adversary that maximizes the utility of the team. In adversarial team multi-player games, the advantage of TMECor over NE is unique, considerably reducing the difficulty of finding optimal strategies. Nonetheless, there are fewer algorithms for TMECor. Therefore, it is an interesting and worthwhile research topic to utilize the algorithms (e.g., CFR, CFR+, etc.) of solving Nash equilibrium in two-player zero-sum extensive-form games for finding a TMECor. Inspired by \cite{DBLP:conf/icml/CarminatiCC022}, we attempt to make a connection between large-scale adversarial team multi-player games and 2-player zero-sum games, achieving better performance and faster convergence than the state-of-the-art algorithms.

\textbf{Main Contributions.} The most outstanding outcome of our work is a generic multi-player transformation algorithm (MPTA) that can convert a tree of adversarial team multi-player games (ATMGs) into a tree of 2-player zero-sum games with theoretical guarantees. Thus, the classical and efficient algorithms in 2-player zero-sum games can be used to solve TMECor in large-scale ATMGs (e.g., CFR \cite{zinkevich2007regret}, CFR+ \cite{DBLP:conf/ijcai/TammelinBJB15}, DCFR \cite{DBLP:conf/aaai/BrownS19}, MCCFR \cite{lanctot2009monte}, etc.). One of the reasons why large-scale ATMGs are difficult to solve is that the action space grows exponentially with the increase in players. To design this algorithm, we propose a new structure, calling it \textit{private information pre-branch}, which can reduce the growth of action space from exponential to a constant level. Theoretically, the more players, the more obvious the effect of reducing the action space. Therefore, the transformed game tree of our method is smaller in size compared to similar algorithms, such that it can speed up the computation of TMECor and make it possible to use it in larger scale team games, which are closer to real-world problems. Furthermore, to provide a primary theoretical guarantee for our work, we prove the equilibrium equivalence of the game before and after the transformation. Finally, the performance of our method has been proven to be excellent through multiple sets of experiments. The experimental results show that the team-maxmin strategy profile obtained by our algorithm is closer to TMECor than the baseline algorithm at the same time and significantly reduces the running time in the same iteration rounds.


\section{Related Work}
Many studies have focused on adversarial team multi-player games (ATMGs) in an attempt to find a solution since the concept of Team-maxmin Equilibrium (TME) was introduced in 1997. Kannan et al. \cite{kannan2002strategic} adapt the idea of a team game and develop an algorithm for finding optimal paths based on information networks. Hansen et al. \cite{DBLP:conf/wine/HansenHMS08} prove that the task of obtaining a TME is FNP-hard and calculate the theoretical time complexity for the first time. According to the communication capabilities of the team members, Celli and Gatti \cite{DBLP:conf/aaai/Celli018} define three different scenarios and corresponding equilibriums for the first time in the extensive-form ATMGs. In particular, it is the first time that computing the TMECor of ATMGs (i.e., the equilibrium in the scenario where team members are allowed to communicate before the game starts) using a column generation algorithm combined with hybrid representation. 

Thereafter, \cite{DBLP:conf/aaai/Zhang020, DBLP:conf/icml/0004020, DBLP:conf/aaai/00010C21} propose a series of improvements to the Hybrid Column Generation (HCG) algorithm. The main disadvantage of this method is that the feasible solution space of integer or mixed-integer linear program is too large, which severely limits the size and speed of games it can solve. Basilico et al. \cite{DBLP:journals/ia/BasilicoCNG17} propose a modified version of the quasi-polynomial time algorithm and an algorithm named \textit{IteratedLP} which is a novel anytime approximation algorithm. \textit{IteratedLP}'s working principle is to maintain the current solution, which provides a policy that can be returned at any time for each team member. Farina et al. \cite{farina2018ex} adopt a new \textit{realization form} representation for mapping the problem of finding an optimal ex-ante-coordinated policy for the team to the problem of finding NE. Zhang and Sandholm \cite{DBLP:conf/aaai/ZhangS22} devise a tree decomposition algorithm for solving team games. To reduce the number of constraints required, the authors use a tree decomposition for constraints and represent the team's strategy space by the correlated strategies of a polytope. Since the team need to sample a strategy policy for each player from a joint probability distribution, Farina et al. \cite{DBLP:conf/icml/FarinaC0S21} propose a modeling on computing the optimal distribution outcome and allowing the team to obtain the highest payoff by increasing the upper limit on the number of strategy files. Cacciamani et al. \cite{DBLP:conf/atal/CacciamaniCC021} and celli et al. \cite{celli2019coordination} use multi-agent reinforcement learning approaches. The former adds a game-theoretic centralized training regimen and serves as a buffer of past experiences. Unfortunately, these reinforcement learning methods can only be applied in particular circumstances. The idea of a team being represented by a single coordinator as used by Carminati et al. \cite{DBLP:conf/icml/CarminatiCC022} is closely related to ours. However, the size of game tree generated by the algorithm used by the authors grows exponentially as the action space increases. This situation makes it difficult to use this algorithm for large-scale team games.

In our work, we propose an innovative approach for transforming a team multi-player game tree into a 2-player game tree, where the 2-player game tree is constructed in a form that is distinct from the method by \cite{DBLP:conf/icml/CarminatiCC022}. In the converted game tree, the coordinator represents the strategy in the same way as the team members. At the same time, the number of new nodes is greatly reduced, allowing the game tree to avoid exponential increases in size.


\section{Preliminaries}

This section briefly introduces some of the basic concepts and definitions used in this paper. To learn more details, see also \cite{DBLP:conf/aaai/Celli018, zinkevich2007regret}. For clarity and intuition, detailed descriptions of some variables are shown in Table ~\ref{table1}.

\begin{table}[t]
  \caption{Detailed descriptions of some variables}
  \label{table1}
  \begin{tabular}{ll}\toprule
    \textit{Variable} & \textit{Detailed description}  \\ \midrule
    $I_{p}(h)$ & \makecell[l]{The information set of player $p$ at decision node \\ $h \in H$.} \\
    $\Pi_{p}(z)$ & \makecell[l]{The set of the player $p$'s normal-form plans where $p$ \\ can reach the terminal node $z \in Z$.} \\
    $\sigma_{p}(h)$ & \makecell[l]{The strategy of player $p$ at decision node $h \in H$.} \\
    $\mu_{p}\left[\pi_{p}\right]$ & \makecell[l]{The probability that player $p \in P$ will follow the \\ actions specified by the normal-form plan $\pi_{p}$.}  \\
    $u_{\mathcal{T}}(z)$ & \makecell[l]{The shared utility of the team with reaching \\ terminal node $z \in Z$.}   \\
    $\Delta\left(\Pi_p\right)$ & \makecell[l]{The formulaic definition of normal-form strategies,\\ i.e.,  the probability distribution of the normal-form \\ plans  of player $p \in P$.}  \\ \bottomrule
  \end{tabular}
\end{table}


\subsection{Extensive-Form Games and Nash Equilibrium}

An extensive-form game $G$ is the tree-form model of imperfect-information games with sequential interactions \cite{kuhn1950extensive, brown2017safe}.

\begin{figure*}
    \centering
    \includegraphics[width=1\textwidth]{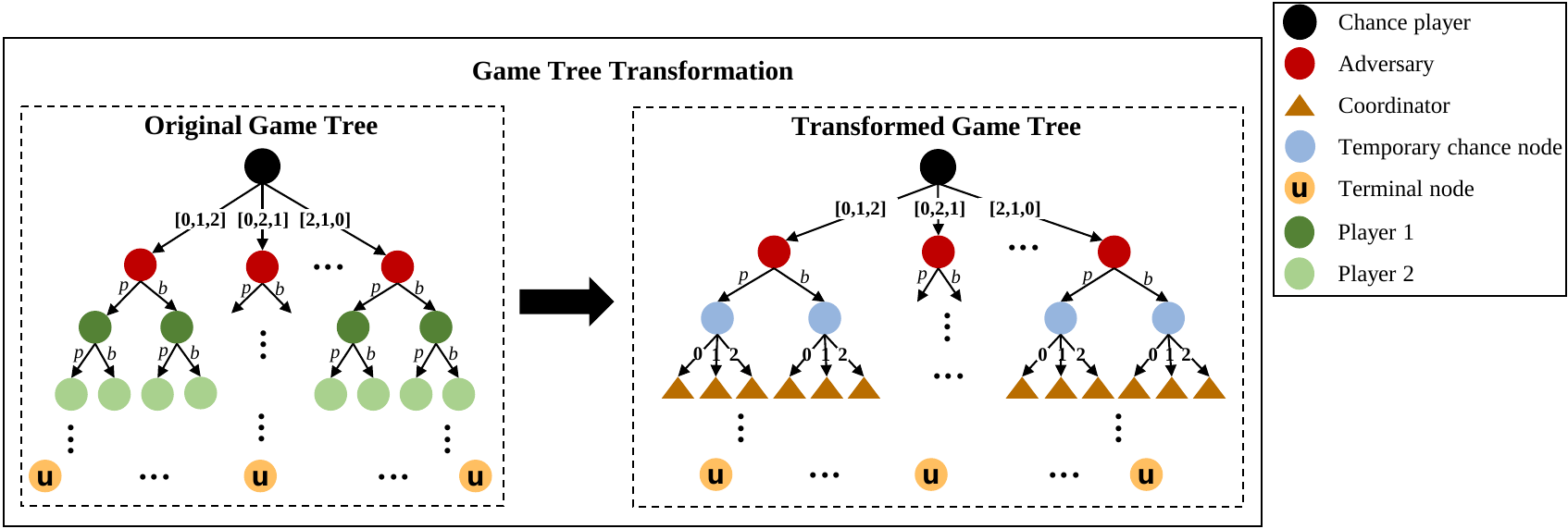}
    \caption{An example of the 3-player game tree transformation process. The dashed box on the left shows a 3-player game tree which fits the definition of ATG and includes player 1, player 2 and the adversary. Player 1 and player 2 form a team against a common opponent. The 2-player game tree after MPTA transformation is displayed in the dashed box on the right. $[0, 1, 2]$ denotes cards assigned to players. $p$ and $b$ are the available actions of players (i.e., "pass" and "bet"). }
    \label{transProcess}
\end{figure*}

\begin{definition}[Extensive-form Games] 
A finite extensive-form game $G$ is a tuple $\left\langle N, A, H, Z, \mathcal{A}, P, u, I\right\rangle$:
\begin{enumerate}
    \item A set $N$ of players, $N=1, 2,..., n$;
    \item A set $A$ of all the possible actions, $A=\cup_{p \in N} A_p$, where $A_p$ denotes a set of available actions of player $p$;
    \item A set $H$ is the set of nodes in the game tree, $h \in H$. $h$ can also be represented action histories (sequence of actions from root node to current node $h$); 
    \item A set $Z$ contains all the leaf nodes of game tree, $Z \subseteq H$;
    \item For each decision node $h \in H$, the result returned by function $\mathcal{A}(h)$ is all available actions at node $h$;
    \item Given a node $h$, the function $P(h)$ is player who takes an action after node $h$;
    \item For each player $p \in N$, the utility function $u_p$  is the payoff mapped from the terminal node $z \in Z$ to the reals $\mathbb{R}$;
    \item A set $I$ belonging to player $p\in N$, all nodes $h, h^{\prime} \in I$ are indistinguishable to $p$ in $I_p$.
\end{enumerate}
\end{definition}

The Nash equilibrium is a common solution concept in zero-sum extensive-form games and is a strategy profile for player $i$ denoted as $\sigma^*_i$. Given the strategy profile for $n$ players, $\sigma_1, \sigma_2, \ldots, \sigma_n$, an NE can be formally represented as
\begin{eqnarray}
u_i(\sigma) & \geq & \max _{\sigma_i^* \in \Sigma_i} u_i\left(\sigma_i^*, \sigma_{-i}\right)
\end{eqnarray}

\subsection{Adversarial Team Multi-Player Games and Team-Maxmin Equilibrium with Correlation}

In this paper, we concentrate on extensive-form adversarial team multi-player games. Formally, an adversarial team multi-player game (ATMG) $\mathcal{G}$ has $N (N \geq 3)$ players, in which a team consisting of $N-1$ team members against independently an adversary $\mathcal{O}$. We refer to $\mathcal{C}$ and $\mathcal{T}$ as the chance player and team (coordinator) respectively. We set up the scenario restricted to a zero-sum extensive-form ATMG, where $u_{\mathcal{T}} = -u_{\mathcal{O}}$. Regardless of whether the team wins or loses, team members share benefits equally with the same utility function in ATMG. That is, $u_i(l)=u_j(l), \forall i, j \in T$.  Moreover, this work will focus on games with perfect recall, where all players are able to recall their previous actions and the corresponding information sets.

\begin{definition}[Behavioral Strategy] 
A behavioral strategy $\sigma_{p}$ of player $p\in N$ is a function that assigns a distribution over all the available actions $\mathcal{A}\left(I_p\right)$ to each $I_p$.
\end{definition}

 A behavioral strategy profile $\sigma$ is composed of each player's behavioral strategy, where $\sigma = \{\sigma_{1}, \sigma_{2}, \cdots, \sigma_{n}\}$. The extensive-form ATMGs also provide additional strategy representation:

\begin{definition}[Normal-Form Plan] 
A normal-form plan $\pi_{p} = \times_{I_{p}} \mathcal{A}(I_{p})$ of player $p$ is a tuple specifying one action for each information set of player $p$.
\end{definition}

Furthermore, the normal-form strategy of player $p \in N$ is denoted as $\mu_p$, which is the probability distribution over the normal-form plans. Similarly, a normal-form strategy profile is $\mu$ $=$ $\{\mu_{1},$ $\mu_{2},$ $\cdots,$ $\mu_{n}\}$. Given a normal-form strategy $\mu_{p}$ of player $p \in N$, $\mu_{-p}$ refers to all normal-form strategies in $\mu$ except $\mu_{p}$. A TMECor is proven to be a Nash equilibrium (NE) which maximizes the team's utility \cite{DBLP:conf/ijcai/00010S22, von1997team}. Concerning ATMG settings, TMECor differs from NE in that it always exists and is unique. The team members use behavioral strategies during the TME calculating procedure. This is due to the lack of necessity for coordination among team members. However, our work focus on the scenario of \textit{ex ante} correlation. If the behavioral strategy is still adopted, the correlation between the normal-form strategies of team members cannot be accurately obtained because of a lack of coordination \cite{DBLP:conf/nips/FarinaC0S18}. Therefore, in order to compute TMECor, it is necessary for team members to adopt normal-form strategies.

A TMECor is able to find through a linear programming formulated over the normal-form strategy profile of all players:
\begin{equation} \label{equation1}
\begin{split}
\max _{\mu_{\mathcal{T}}} \min _{\mu_{\mathcal{O}}} \sum_{z \in Z} & \sum_{\substack{p \in \mathcal{T}  \\ \pi_{p} \in \Pi_{p}(z) \\ \pi_{\mathcal{O}} \in \Pi_{\mathcal{O}}(z)}} \mu_{\mathcal{T}}\left[\pi_{p}\right] \mu_{\mathcal{O}}\left[\pi_{\mathcal{O}}\right] u_{\mathcal{T}}(z) \\
\text { s.t. } & \mu_{\mathcal{T}} \in \Delta\left(\times_{p \in \mathcal{T}}{ } \Pi_p\right) \\
& \mu_{O} \in \Delta\left(\Pi_{O}\right)
\end{split}
\end{equation}


\section{Method}

\subsection{Overview}

To solve the adversarial team multi-player games, the core of our work is to transform an ATMGs-based multi-player game into a 2-player game, then use the well-established algorithms in 2-player games to find a NE equivalent to the TMECor, as shown in Figure \ref{Framework}. For achieving this purpose, we first construct a new structure representation for the team member nodes in the original multi-player game tree. Secondly, we utilize this structure to design an algorithm named MPTA, which is used to transform a original multi-player game tree into a 2-player game tree. The MPTA consists of two phases: 1) traversing over the whole original game tree and transforming all nodes to obtain a 2-player game tree; 2) merging information sets with temporary private information and pruning for 2-player game tree.

\begin{figure}[hpb]
    \centering
    \includegraphics[width=0.45\textwidth]{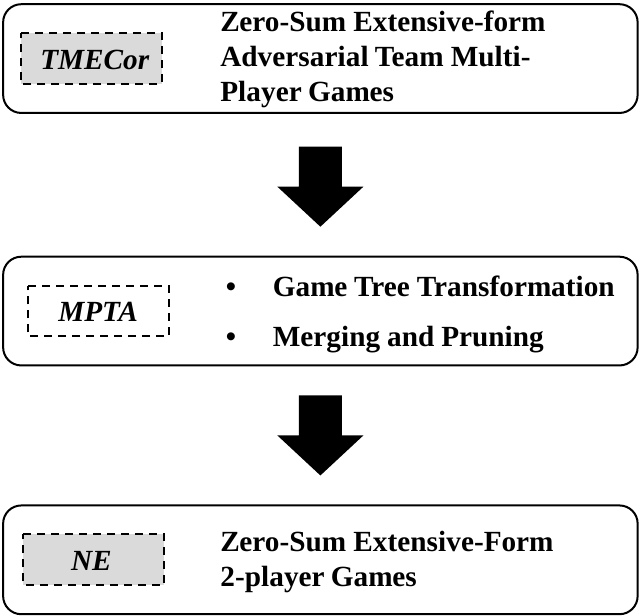}
    \caption{An overview of our method.}
    \label{Framework}
\end{figure}

\subsection{The Structure of Private Information Pre-Branch}
We add coordinator players and temporary opportunity nodes to the MPTA process, and they form a new structure, which we call the \textit{private information pre-branch} (PIPB), as shown in the virtual triangle box in Figure \ref{transformed}. The coordinator player only makes decisions for one team member at a time, so we give him all possible information of hand cards in advance. The extensive-form game has property with sequence decision, and we assume that the adversary takes action first in the game process. The parent of each coordinator player node must be a temporary chance node according to this structure. In other words, the coordinator makes decisions based on the structure of PIPB. 

\begin{prop}[growth of action space]
    Given a transformed game $\mathcal{G}$, each occurrence of a coordinator player node $c$ in the game tree increases the action space by $\lvert cards \rvert \times \lvert A \rvert$, where $\lvert cards \rvert$ indicate the number of cards.
\end{prop}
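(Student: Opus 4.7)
The plan is to read the count directly off the structure of the private information pre-branch (PIPB) described just above the proposition. First I would unpack the two-layer shape of a PIPB: at its root sits a temporary chance node whose outgoing branches enumerate the $|cards|$ possible private hands the relevant team member could hold, and beneath each of these branches hangs exactly one coordinator node $c$ with action set $\mathcal{A}$ of size $|A|$. The standing assumption that ``the parent of each coordinator player node must be a temporary chance node'' makes this decomposition unambiguous, so there is no ambiguity about what is being counted.

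Next I would perform the count itself. Fix a single occurrence of the coordinator node $c$, viewed together with the PIPB that introduces it. The temporary chance node contributes $|cards|$ outgoing branches, and beneath each branch the coordinator has $|A|$ admissible actions. Because the coordinator acts for a single team member conditioned on the hand revealed by the temporary chance node, the $|cards|$ sibling coordinator nodes carry disjoint copies of $\mathcal{A}$ in the transformed tree. Summing the outgoing action edges at the coordinator layer of this pre-branch therefore gives
\[
|cards| \cdot |A|,
\]
which is precisely the claimed increase in action space.

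The main thing I would take care to justify --- rather than just to count --- is why this multiplicative figure is the right notion of ``growth of action space'' to attribute to one coordinator occurrence. The natural foil is the naive alternative, in which the coordinator would have to commit \emph{ex ante} to a full tuple of actions, one per private hand, yielding $|A|^{|cards|}$ pure actions at a single coordinator move. I would include a short remark contrasting the two: the PIPB refactors what would otherwise be a combinatorial Cartesian product into a chance layer of size $|cards|$ followed by a conditional coordinator layer of size $|A|$, turning an exponential into a product. Once this contrast is made explicit, the proposition follows from the direct structural count above, and no induction or further machinery is required; the genuine content of the statement is structural (the shape of PIPB) rather than combinatorial.
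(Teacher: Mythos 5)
The paper states this proposition without any accompanying proof, so there is nothing of the authors' to compare your argument against; judged on its own, your direct structural count is correct and is exactly what Algorithm 1 supports. A team-member decision point is replaced by one temporary chance node with $\lvert cards \rvert$ outcomes, each followed by a coordinator node offering the original $\lvert A \rvert$ actions, which yields $\lvert cards \rvert \cdot \lvert A \rvert$ new action edges per occurrence, and your contrast with the naive $\lvert A \rvert^{\lvert cards \rvert}$ ex-ante commitment correctly captures why the authors describe this as reducing the growth from exponential to a constant level.

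Two small points are worth pinning down explicitly. First, the proposition's phrasing ``each occurrence of a coordinator player node $c$'' literally refers to a single node, which carries only $\lvert A \rvert$ actions; the stated figure is only correct under your reading, in which one ``occurrence'' means one private information pre-branch, i.e., the temporary chance node together with its $\lvert cards \rvert$ coordinator children. You should state that identification as part of the proof rather than leave it implicit. Second, your count describes the Phase-1 tree: the Phase-2 pruning step (Algorithm 2) removes the temporary-chance branches that disagree with the card already stored in the information pool, so for every coordinator occurrence after a team member's first decision the actual increase is strictly smaller than $\lvert cards \rvert \cdot \lvert A \rvert$. Neither point invalidates your argument; they just make precise what is being counted and when the bound is tight.
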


\begin{algorithm}[t]
\caption{Multi-Player Game Tree Transformation Algorithm}
\label{algorithm1}
\begin{algorithmic}[1] 
\Function {MPGTT}{$G$}
\State  \Comment{$G=  \left\langle N, A, H, Z, \mathcal{A}, P, u, I\right\rangle$ is original game tree}
\State initialize $\mathcal{G}$ new ATMG tree
\State $Rank \gets Rank(G)$
\State $origRoot \gets Root(G)$
\State $newRoot \gets Root(\mathcal{G})$
\State $\mathcal{G} \gets ProcOfTrans(origRoot, newRoot, \mathcal{G}, Rank)$
\State \Return{$\mathcal{G}$}
\EndFunction 

\Function {ProcOfTrans}{$origNode, parNode, newTree, rank$}
\If {$origNode \in Z$}  \Comment{for leaf node}
\State addLeaf($origNode$)
\ElsIf {$P(origNode)=\{ \mathcal{C} \}$} \Comment{for chance node}
\For {$childNodes$ in $origNode$}
\State ProcOfTrans($childNodes$, $newRoot$, $newTree$,  \\ $rank$)
\EndFor
\ElsIf {$P(origNode=\{ \mathcal{O} \})$}   \Comment{for adversary node }
\State $newNode \gets origNode$
\State $addnode(newNode)$
\For {$childNodes$ in $origNode$}
\State ProcOfTrans($childNodes$, $newNode$, $newTree$, $rank$)
\EndFor

\Else  \Comment{for team member node}
\State $addTempChanceNode(origNode)$
\State $newNode \gets origNode$
\For {$r \in rank$}
\State $tempChanceNode \gets$ a new node whose parent is newNode
\For {$childNodes$ in $origNode$}
\State ProcOfTrans($childNodes$, $tempChanceNode$, $newTree$, $rank$) 
\EndFor
\EndFor
\EndIf
\State \Return{$newTree$}
\EndFunction
\end{algorithmic}
\end{algorithm}


\subsection{Phase 1: Game Tree Transformation}

The primary distinction between the MPTA and previous approaches is that the coordinator player in the transformed game tree represents a team member who is playing rather than the entire team. Team members have both private information, e.g., the hand cards, and public information that can be observed by other players, including adversary, e.g., the game history. Note that the coordinator player only knows the private information of a team member currently represented and public information of the current situation, but not the other players' private information. The pseudo-code of the transformation process is depicted in Algorithm \ref{algorithm1}.

\begin{figure}[hpb]
    \centering
    \includegraphics[width=0.45\textwidth]{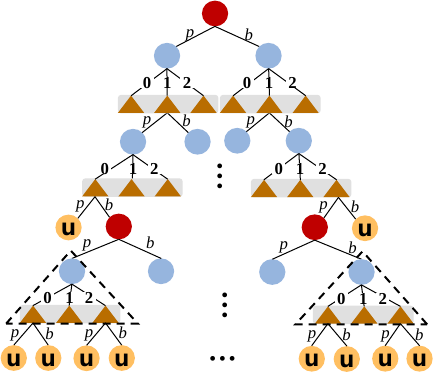}
    \caption{An example of transformed game tree's two branches omitting the chance node, with 10 action sequences: $[ppp]$, $[bbb]$, $[ppbpp]$, $[ppbpb]$, $[ppbbp]$, $[ppbbb]$, $[bbppp]$, $bbppb$, $[bbpbp]$, and $[bbpbb]$. The triangular dashed box shows one of the private information pre-branching structures.}
    \label{transformed}
\end{figure}

Since our algorithm is built on a game tree, we first construct a complete game tree based on an improved generic game scenario (e.g., Kuhn Poker \cite{kuhn1950simplified}), as illustrated in the dashed box on the left of Figure \ref{transProcess}. The chance player who deals the cards is the root node in the original game tree, and its branches represent different dealing situations. As the number of cards rises, the number of chance player's child nodes often increases as well. The nodes below the root node are the players' decision nodes, where the players choose their actions in turn according to the principles of a sequential game. The payoff for a round will be calculated once all players have finished their actions and reached the terminal nodes. In particular, team members share the benefits whether they win or lose because they use the same utility function in ATMG.

Then, an original game tree $G$ will be passed as a parameter to the MPTA. We add each of its branches to the transformed game tree $\mathcal{G}$ for the root node. Player decision nodes in $G$ can be divided into two categories: decision nodes of team members and an adversary. The adversary decision nodes are unmodified. That is, the same branches are built in $\mathcal{G}$ for adversary nodes. We introduce a coordinator player to replace the team member who is taking action. More specifically, we transform the team member decision nodes into the coordinator player, who provides the current player's strategies. However, the coordinator player is not aware of the team member's hand cards when first choosing an action on behalf of the team member. Thus, we construct temporary chance nodes to represent all potential situations for a team member's hand cards, as shown in Figure \ref{transformed}. We design the \textit{information pool} for storing information of team members, which will contribute to optimizing our approach in Subsection \ref{4.4}.

\subsection{Phase 2: Merging and Pruning} \label{4.4}

\paragraph{\textbf{Merging information sets with temporary private information.}} The information set, which includes both the public action histories and the private information of players, is an indispensable property of imperfect information games. Each branch of the root node represents a different distribution of cards. In the original game tree, there are no multiple nodes belonging to the same information set within the same branch. On the contrary, the condition is quite dissimilar after transformation. We begin at the root node and select two branches of the transformed game tree where one player has the same hand cards, except for temporary chance nodes. The nodes of players are traversed in turn. The player's nodes on the same layer belong to a single information set in the case of same game histories, even though they have different virtual private information. In our work, nodes are given the same number to mark the same information set.

\paragraph{\textbf{Pruning for transformed tree.}} Under the same branch of root node, the coordinator does not know private information if and only if he makes the first decisions for each team member in turn. Their private information is stored in the \textit{information pool} when all team members have been replaced by the coordinator. Then, the coordinator will be allowed to extract the current team member's hand cards from \textit{information pool} to make the next decisions. During the transformation process, the coordinator is only authorized to extract the private information of one player at a time, which means that the private information of team members is independent of each other. In this way, we can prune the transformed game tree to make its scale smaller, as shown in Algorithm \ref{algorithm2}.


\begin{algorithm}[t]
\caption{Pruning Algorithm}
\label{algorithm2}
\begin{algorithmic}[1] 
\Function {Prun}{$rootNode$}
\State  \Comment{$rootNode \in \mathcal{G}$}
\For {$childNode$ in $root$}
\If {$childNode \in Z$}
\State \Return{}
\ElsIf {$childNode \in \mathcal{C}$ or $\mathcal{O}$ or $T$}
\State Prun($childNode$)
\ElsIf {$childNode \in \mathcal{C}_t$}
\If {$Sequence(\mathcal{T}) \neq \emptyset$}
\For {$a$ in $\mathcal{A}(childNode)$}
\If {$a \neq cards[P(childNode(a))]$}
\State remove($childNode(a)$)
\EndIf
\EndFor
\Else
\State $Pruned(childNode)$
\EndIf
\EndIf
\EndFor

\EndFunction
\end{algorithmic}
\end{algorithm}


\section{Equilibrium Equivalence Proof}
As a primary guarantee for this work, we prove theoretically that game trees before and after the transformation are essentially equivalent. It is further reasoned that TMECor in the adversarial team multi-player games is equivalently related to NE in 2-player zero-sum extensive-form games. For simplicity, we denote the original game tree by $G$ and the transformed game tree $\mathcal{G}$ is represented as the return value of $MPTA(G)$ in Algorithm \ref{algorithm1}. $\sigma^{*}$ is used to represent the TMECor in $G$. Given a player $p$, $\sigma_{p}$ denotes the strategy in $G$ and $\mu_{p}$ denotes the normal-form strategy in $\mathcal{G}$. Let $H_{\sigma_{p}}$ be the set of decision nodes reached according to player $p$'s strategy $\sigma_{p}$ and $c$ is the coordinator player in $\mathcal{G}$. Note that temporary chance node $\mathcal{C}_{t}$ is not the players' decision node. The equilibrium equivalence will be proved according to the following lemmas and theorems.

\begin{lemma}\label{lemma1}
    Given a multi-player game tree $G$ that satisfies the definition of ATMGs, and the transformed game tree $\mathcal{G} = MPTA(G)$, for any player $p$, any decision node $h_{p}$ in $G$ can be mapped to the decision node $h^{\prime}_{p}$ in $\mathcal{G}$. Formally, $\forall p \in N$ in $G$, $h \in H$ and $h^{\prime} \in H^{\prime}$, $H_{p} \subsetneqq H^{\prime}_{p}$.
\end{lemma}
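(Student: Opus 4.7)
The plan is to build an injection $\phi : H_{p} \to H'_{p}$ by structural induction on the original tree $G$, following the recursive calls of ProcOfTrans in Algorithm~\ref{algorithm1}, and then separately exhibit an element of $H'_{p} \setminus \phi(H_{p})$ to obtain the strict inclusion. At each recursive step I will simultaneously (i) define the image of the current node and (ii) verify that the image belongs to the correct player's decision-node set in $\mathcal{G}$, where a coordinator node is attributed to team member $p$ whenever $c$ is acting on behalf of $p$ along its root-to-node path. The chance nodes introduced by MPTA (both the original chance node $\mathcal{C}$ and the temporary $\mathcal{C}_{t}$) are not decision nodes, so they do not contribute to either $H_{p}$ or $H'_{p}$ and only appear in the argument as path bookkeeping.

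For the inductive step, I case on $P(h)$. Chance nodes in $H$ are not decision nodes, so there is nothing to map; the recursion simply descends through lines 13--16 of Algorithm~\ref{algorithm1}. If $h$ is an adversary decision node, lines 17--20 copy $h$ into $\mathcal{G}$ as a fresh $newNode$ with $P(newNode) = \mathcal{O}$, and I set $\phi(h) = newNode$. If $h$ is a team-member decision node of player $p \in \mathcal{T}$, lines 22--28 insert a temporary chance node $\mathcal{C}_{t}$ and then generate one coordinator replica per private-information value $r \in rank$; I set $\phi(h)$ to the replica that matches $h$'s actual private information, which by construction has the same public action history as $h$. Injectivity of $\phi$ follows from perfect recall: distinct decision nodes in $G$ have distinct root-to-node histories, and ProcOfTrans preserves these histories modulo the insertion of $\mathcal{C}_{t}$ nodes, so images cannot collide.

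For the strict inclusion, I pick any $h \in H_{p}$ that is a team-member decision node (such nodes exist in any nontrivial ATMG); the remaining $|rank| - 1$ coordinator siblings produced in lines 24--28 lie in $H'_{p}$ but have no preimage under $\phi$, which witnesses $H_{p} \subsetneqq H'_{p}$. Proposition~1 already records that this very expansion is the source of the added action space, so it is natural that it is also the source of the added decision nodes. The principal obstacle I foresee is making precise the identification of $H'_{p}$ for a team member $p$: formally the coordinator $c$ is the only decision-maker in $\mathcal{G}$ for the team, so the claim only makes sense once one fixes a convention that splits the coordinator's decision nodes among the team members according to whom $c$ is acting for. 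Once that bookkeeping is nailed down, the inductive argument above is essentially mechanical.
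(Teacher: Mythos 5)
Your proposal follows essentially the same route as the paper's proof: a case analysis on node type (leaf, chance, adversary, team member) tracking the recursive calls of Algorithm~\ref{algorithm1}, with the temporary chance nodes of the PIPB structure supplying the coordinator replicas for each possible private-information value. Your version is in fact more complete than the paper's, which only asserts the correspondence in each case and never addresses why the inclusion $H_{p} \subsetneqq H^{\prime}_{p}$ is strict, whereas you explicitly construct the injection, justify injectivity via perfect recall, exhibit the $\lvert rank\rvert - 1$ unmatched coordinator replicas as witnesses for strictness, and correctly flag the bookkeeping issue (attributing the coordinator's decision nodes to individual team members) that the paper leaves implicit.
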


\begin{proof}
    We can prove that Lemma \ref{lemma1} holds according to the process of transformation in Algorithm \ref{algorithm1}, using the characteristics of the game tree structure. To avoid notational confusion, let $Z$ and $Z^{\prime}$ represent the set of leaf nodes in $G$ and $\mathcal{G}$, respectively. 
    \begin{itemize}
        \item \textbf{For leaf node:} because the leaf nodes correspond to the players' payoff, they are added directly to the game tree without any changes.  Formally, $Z = Z^{\prime}$.
        
        \item \textbf{For adversary node:} $\sigma_{\mathcal{O}}$ in $G$ is mapped to $\mu_{\mathcal{O}}$ in $\mathcal{G}$ by our method, which provides a guarantee for the equality of $H_{\sigma_{\mathcal{O}}}$ and $H_{\mu_{\mathcal{O}}}$.
        
        \item \textbf{For chance node:} the root node of the game tree is not changed during the transformation. So $\mathcal{C}$ in $G$ and $\mathcal{C}^{\prime}$ in $\mathcal{G}$ are same.
        
        \item \textbf{For team member node:} For any decision node of team member $p$, there is always a corresponding decision node of coordinator player $c$ with the same game histories in the transformed game tree.
        
    \end{itemize}
    Since the coordinator player does not know the private information of all players, temporary chance node of the PIPB structure in $\mathcal{G}$ will provide the coordinator player with all potential situations of hand cards.

    This is the end of the proof.
\end{proof}

\begin{lemma} \label{lemma2}
    Given a multi-player game tree $G$ that satisfies the definition of ATMGs, and the transformed game tree $\mathcal{G} = MPTA(G)$, for any player $p$, any decision node $h_{p}$ in $\mathcal{G}$ can maintain the correspondence with the decision node $h^{\prime}_{p}$ in $G$. Formally, $\forall p \in N$ in $G$, $h \in H$ and $h^{\prime} \in H^{\prime}$, $H_{p} \subsetneqq H^{\prime}_{p}$.
\end{lemma}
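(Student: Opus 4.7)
The plan is to mirror the case analysis of Lemma~\ref{lemma1}, but in the reverse direction: starting from a decision node in $\mathcal{G}$ and producing a corresponding decision node in $G$. Since Algorithm~\ref{algorithm1} constructs $\mathcal{G}$ recursively from $G$, I would proceed by structural induction on the execution of \textsc{ProcOfTrans}, maintaining as an invariant that every decision node created in $\mathcal{G}$ was spawned by exactly one recursive call whose first argument $origNode$ is a decision node of $G$. This invariant is visibly preserved by each branch of the algorithm, so the induction reduces to a case analysis on the type of the node being processed.

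Concretely, I would partition the decision nodes of $\mathcal{G}$ into three classes mirroring the branches of Algorithm~\ref{algorithm1}: (i) adversary nodes, copied verbatim at the \textsc{else if} $P(origNode)=\{\mathcal{O}\}$ branch; (ii) coordinator nodes, introduced at the final \textsc{else} branch in place of a team-member node; and (iii) leaf and chance nodes, which are not decision nodes. The temporary chance nodes $\mathcal{C}_t$ lie outside the scope of the statement, as the excerpt already stipulates that they are not players' decision nodes. For adversary nodes the correspondence is immediate because $newNode \gets origNode$ in line~18. For coordinator nodes, the backward map is defined by the inverse of the construction: each coordinator node $h'_c \in H'$ inherits its public action history from a unique team-member node $h_p \in H$ under whose call it was created, so sending $h'_c$ to $h_p$ is well defined and respects player identity in the team-versus-adversary sense.

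The main obstacle will be handling the multiplicity introduced by the PIPB structure: a single team-member node $h_p$ in $G$ may correspond to several coordinator nodes in $\mathcal{G}$, one per branch emitted by the \textsc{for} loop over $r \in rank$ in lines~26--29. Hence the backward map is many-to-one rather than bijective, which is precisely what the strict containment $H_p \subsetneqq H'_p$ records. I would make this explicit and then invoke the merging step of Subsection~\ref{4.4}, which collapses all coordinator copies sharing the same public action history and the same realized private information into a single information set. After merging, the strategic content of the correspondence is preserved: every choice the coordinator can make at the image node in $\mathcal{G}$ corresponds to a choice available to the represented team member at $h_p$ in $G$, and vice versa.

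I would close by noting that combining this backward correspondence with Lemma~\ref{lemma1} yields a history-preserving embedding of the decision structure of $G$ into that of $\mathcal{G}$, with the extra decision nodes in $\mathcal{G}$ accounted for entirely by the PIPB duplication. This is the structural foundation that the subsequent theorem will leverage to lift a normal-form strategy profile in $\mathcal{G}$ back to a correlated profile in $G$ and to establish the equivalence between NE in $\mathcal{G}$ and TMECor in $G$.
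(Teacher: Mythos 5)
Your proposal is correct and follows essentially the same route as the paper: the paper's proof simply says the argument mirrors Lemma~\ref{lemma1} in reverse and then notes that the coordinator copies spawned under a temporary chance node belong to a single information set, which is exactly the multiplicity issue you isolate and resolve via the merging step of Subsection~\ref{4.4}. Your version is considerably more explicit (the structural induction on \textsc{ProcOfTrans} and the well-definedness of the backward map are only implicit in the paper), but the underlying argument is the same.
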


\begin{proof}
    The proof process is similar to that of Lemma \ref{lemma1}, and the above process is not repeated in this proof. Furthermore, the coordinator nodes added by $\mathcal{C}_{t}$ in $\mathcal{G}$ belongs to the same information set. Therefore, their correspondence with decision nodes in $G$ still holds.

    This is the end of the proof.
\end{proof}

\begin{corollary} \label{corollary1}
    Given a multi-player game tree $G$ that satisfies the definition of ATMGs, and the transformed game tree $\mathcal{G} = MPTA(G)$, for any player $p$ in the original game, his strategies must be mapped in the transformed game. Formally, 

    $$
    \begin{aligned}
        &\forall p \in \mathcal{T}: \sigma_{p}(h) = \mu_{c}(h) \\
        &  \sigma_{\mathcal{O}}(h) = \mu_{\mathcal{O'}}(h) \\
        & \sigma_{\mathcal{C}} = \mu_{\mathcal{C}}
    \end{aligned}
    $$
    
\end{corollary}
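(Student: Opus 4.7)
The plan is to derive the three equalities directly from the bijective node correspondence established by Lemmas~\ref{lemma1} and \ref{lemma2}. Because a behavioral (or normal-form) strategy assigns a distribution to each decision node of its owner, once the decision nodes of each player in $G$ are identified with those in $\mathcal{G}$ and the sets of available actions at corresponding nodes coincide, the strategy on one side transfers verbatim to the other. The proof will therefore reduce to a case analysis over the three types of player nodes handled by Algorithm~\ref{algorithm1}.

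First I would dispatch the easy cases. For the chance player, the proof of Lemma~\ref{lemma1} already notes that the root is copied into $\mathcal{G}$ unchanged, and no branch of Algorithm~\ref{algorithm1} alters any chance probability, so $\sigma_{\mathcal{C}} = \mu_{\mathcal{C}}$ holds by construction. For the adversary, the adversary branch of Algorithm~\ref{algorithm1} copies each node $h$ into $\mathcal{G}$ while preserving its children and available actions, and Lemma~\ref{lemma1} supplies the corresponding node identification; hence $\sigma_{\mathcal{O}}(h) = \mu_{\mathcal{O}'}(h)$ at every adversary decision node.

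The substantive step is the team-member case. For each team member decision node $h_p \in H$, Lemmas~\ref{lemma1} and \ref{lemma2} pair $h_p$ with a coordinator node $h'$ in $\mathcal{G}$ sitting immediately below a temporary chance node of a PIPB structure. By the team-member branch of Algorithm~\ref{algorithm1}, the coordinator's action set at $h'$ is exactly $\mathcal{A}(h_p)$; moreover, the temporary chance node injects precisely the private information that player $p$ would hold at $h_p$, and the Phase~2 merging step groups all coordinator nodes sharing the same public history and simulated private information into a single information set in bijection with $I_p(h_p)$. Defining $\mu_c(h') := \sigma_p(h_p)$ therefore yields a valid coordinator strategy, and the reverse assignment is symmetric.

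The main obstacle I anticipate is arguing cleanly that this correspondence is not corrupted by the pruning in Algorithm~\ref{algorithm2}. What has to be verified is that pruning only removes coordinator branches whose temporary private information contradicts the content of the \emph{information pool}, so that only nodes unreachable under player $p$'s actual hand are discarded. Under perfect recall, this ensures that exactly one temporary chance draw per team member is realized along every root-to-leaf path, so the coordinator's information state on the merged and pruned tree remains isomorphic to the team member's information state in $G$, making the strategy identification well defined in both directions.
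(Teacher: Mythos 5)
Your proposal is correct and takes essentially the same route as the paper: the paper's own proof is a single sentence asserting that the strategy equivalence ``can be learned from Lemma~\ref{lemma1} and Lemma~\ref{lemma2},'' and your case analysis over chance, adversary, and team-member nodes (plus the discussion of merging and pruning) simply spells out what that appeal must contain, so you supply strictly more detail than the paper does. The only minor caveat is your opening claim of a \emph{bijective} node correspondence --- the lemmas as stated only give the inclusion $H_{p} \subsetneqq H^{\prime}_{p}$, and the true one-to-one correspondence holds at the information-set level after Phase~2 merging, which your team-member case correctly identifies.
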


\begin{proof}
    We can learn from Lemma \ref{lemma1} and Lemma \ref{lemma2} that the strategies of players in the original game and the transformed game are equivalent.
\end{proof}

\begin{theorem} \label{theorem1}
    Given a multi-player game tree $G$ that satisfies the definition of ATMGs, and the transformed game tree $\mathcal{G} = MPTA(G)$, for any team member $p$ and opponent player $\mathcal{O}$, we have:
    $$
    \begin{aligned}
    &\forall p \in N_{G}, c \in N_{\mathcal{G}} : u_{p}(\sigma)= u_{c}(\mu)  \\
    & \mathcal{O} \in N_{G}, \mathcal{O}^{\prime} \in N_{\mathcal{G}} : u_{\mathcal{O}}(\sigma) = u_{\mathcal{O}^{\prime}}(\mu)
    \end{aligned}
    $$
where the strategy $\mu$ is a mapping of strategy $\sigma$ from $G$ to $\mathcal{G}$.
\end{theorem}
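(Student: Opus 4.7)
The plan is to reduce the equality of utilities to the equality of per-terminal reach probabilities. Since $u_p(\sigma) = \sum_{z \in Z} \pi^{\sigma}(z)\, u_p(z)$ and Lemma~\ref{lemma1} gives $Z = Z^{\prime}$ together with $u_p(z) = u_c(z)$ and $u_{\mathcal{O}}(z) = u_{\mathcal{O}^{\prime}}(z)$ at corresponding leaves, it suffices to show $\pi^{\sigma}(z) = \pi^{\mu}(z)$ for every terminal $z$. Both utility equalities asserted by the theorem then follow simultaneously, so I can focus the whole argument on the reach-probability statement.

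First I would factor each reach probability as a product of per-node contributions along the unique root-to-$z$ path. Lemma~\ref{lemma2} supplies the bijective correspondence between decision nodes of $G$ and $\mathcal{G}$, up to the temporary chance insertions. By Corollary~\ref{corollary1}, three families of factors are immediately identical: the root chance factor, because $\sigma_{\mathcal{C}} = \mu_{\mathcal{C}}$; every adversary factor, because $\sigma_{\mathcal{O}}(h) = \mu_{\mathcal{O}^{\prime}}(h)$; and every team-member factor $\sigma_p(h)$, which matches the paired coordinator factor $\mu_c(h)$. Hence every contribution arising from an original-game decision node is preserved verbatim.

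The hard part will be absorbing the extra contributions from the temporary chance nodes $\mathcal{C}_t$ inserted inside each PIPB, since they have no counterpart in $G$. My plan is to argue that, after the Phase~2 merging and the pruning of Algorithm~\ref{algorithm2}, each such $\mathcal{C}_t$ on the path to $z$ effectively contributes a factor of $1$. For any non-first team-member decision, the relevant hand is already recorded in the \emph{information pool}, so pruning removes every branch of $\mathcal{C}_t$ except the one consistent with the root-chance assignment, leaving the factor trivially equal to $1$. For the first such decision, the coordinator acts at a merged information set indexed by the announced card; the PIPB construction together with the merging rule ensures that only the branch whose announced card coincides with the actual hand survives on the root-to-$z$ path, and on that branch $\mu_c$ conditioned on card $k$ equals $\sigma_p$ restricted to card $k$. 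The main technical hurdle I anticipate is phrasing this rigorously: I will need to couple the root-chance outcome with the surviving $\mathcal{C}_t$ branch so that their joint contribution on any surviving path is exactly $\sigma_p(h)$ rather than being diluted by a spurious $1/|cards|$ factor from the uniform temporary chance.

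Putting the three categories together, $\pi^{\sigma}(z) = \pi^{\mu}(z)$ for every $z \in Z$, and summing against the preserved leaf utilities gives $u_p(\sigma) = u_c(\mu)$ for every team member and $u_{\mathcal{O}}(\sigma) = u_{\mathcal{O}^{\prime}}(\mu)$ for the adversary. The rest of the argument is a routine structural induction on the depth of the root-to-$z$ path, so the whole proof reduces to establishing the temporary-chance-factor-equals-one claim described above.
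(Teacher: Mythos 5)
Your decomposition of $u_p(\sigma)$ into per-terminal reach probabilities is the right way to make this theorem precise, and it is considerably more careful than the paper's own proof, which consists of two sentences: it asserts that the payoff equivalence "can be obtained from" Lemma~\ref{lemma1}, Lemma~\ref{lemma2} and Corollary~\ref{corollary1}, and observes that the terminal nodes are copied verbatim by Algorithm~\ref{algorithm1}. The paper never factors the path contributions and never confronts the inserted temporary chance nodes. So in spirit you follow the same route (node correspondence plus strategy correspondence plus preserved leaves), but you have correctly isolated the one place where that route needs actual work: the $\mathcal{C}_t$ factors must wash out of $\pi^{\mu}(z)$.

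That identification is also where your proposal stops being a proof. You explicitly reduce the theorem to the claim that each temporary chance node contributes a factor of $1$ on every surviving root-to-leaf path, and you defer the argument. This is not a routine detail: if $\mathcal{C}_t$ spreads mass over $\lvert cards\rvert$ branches and the merging step places the coordinator's nodes across those branches into a single information set, the surviving branch carries a $1/\lvert cards\rvert$ factor with no counterpart in $G$, and $\pi^{\mu}(z)\neq\pi^{\sigma}(z)$, which would falsify the theorem as you have set it up. Whether the pruning of Algorithm~\ref{algorithm2} actually renormalizes the surviving branch to probability $1$, rather than merely deleting inconsistent siblings, is exactly what needs an argument --- and your own text concedes that for the \emph{first} decision of each team member the coordinator does not yet know the hand, so no pruning applies there at all and the coupling with the root-chance outcome is left entirely open. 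Until that claim is established, you have reduced the theorem to an unproven lemma. To be fair, the paper's published proof has the same hole and does not acknowledge it; but as a self-contained argument, yours is incomplete at precisely the step that carries all the content.
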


\begin{proof}
    The proof can be obtained from Lemma \ref{lemma1}, Lemma \ref{lemma2} and Corollary \ref{corollary1} that the payoffs of the players are equivalent in the original game and transformed game. In fact, the terminal nodes representing the utility is added directly to $\mathcal{G}$ according to the transformation steps in Algorithm \ref{algorithm1}.
\end{proof}

\begin{theorem} \label{theorem2}
    Given a multi-player game tree $G$ that satisfies the definition of ATMGs, and the transformed game tree $\mathcal{G} = MPTA(G)$, a Nash equilibrium in $\mathcal{G}$ has equilibrium equivalence with TMECor in $G$. Formally, $\mu^{*}_{c} = \sigma^{*}_{\mathcal{T}}$.
\end{theorem}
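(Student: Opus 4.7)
The plan is to use Theorem 1 (payoff equivalence) together with Corollary 1 (strategy equivalence) to reduce the equilibrium equivalence claim to a matching of two optimization problems: the maxmin characterization of a Nash equilibrium in the zero-sum 2-player game $\mathcal{G}$, and the TMECor linear program (Eq.~\ref{equation1}) in the ATMG $G$. Concretely, since $\mathcal{G}$ is 2-player zero-sum, any NE $(\mu^{*}_{c}, \mu^{*}_{\mathcal{O}'})$ satisfies
$$
u_{c}(\mu^{*}_{c}, \mu^{*}_{\mathcal{O}'}) \;=\; \max_{\mu_{c}} \min_{\mu_{\mathcal{O}'}} u_{c}(\mu_{c}, \mu_{\mathcal{O}'}),
$$
so it suffices to show that this maxmin value coincides with the TMECor maxmin value and that an optimizer on one side yields an optimizer on the other.

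The key step is to exhibit a bijection $\Phi$ between the coordinator's normal-form strategy space $\Delta(\Pi_{c})$ in $\mathcal{G}$ and the team's joint normal-form strategy space $\Delta(\times_{p\in\mathcal{T}}\Pi_{p})$ in $G$. First I would use the PIPB structure: at every coordinator decision point the temporary chance node $\mathcal{C}_t$ enumerates every possible private information configuration of the team member currently represented, and the merging step in Phase~2 guarantees that coordinator nodes sharing the same public history but distinct private realizations lie in distinct information sets. Hence a pure plan $\pi_{c}\in\Pi_{c}$ specifies, for every (team member, private information, public history) triple, exactly one action — which is precisely the data of a joint team plan $\times_{p\in\mathcal{T}}\pi_{p}$. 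This establishes $\Phi$ on pure plans, and convex extension gives the bijection on mixed strategies; crucially, $\Phi$ preserves correlation across team members because the randomness lives at the single coordinator level, matching the \emph{ex ante} coordination semantics of TMECor.

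With $\Phi$ in hand, Corollary 1 gives $\mu_{\mathcal{O}'}\leftrightarrow\mu_{\mathcal{O}}$ and identifies chance distributions, and Theorem 1 yields the payoff identity
$$
u_{c}(\mu_{c},\mu_{\mathcal{O}'}) \;=\; \sum_{z\in Z}\sum_{\substack{p\in\mathcal{T}\\ \pi_{p}\in\Pi_{p}(z)\\ \pi_{\mathcal{O}}\in\Pi_{\mathcal{O}}(z)}} \mu_{\mathcal{T}}[\pi_{p}]\,\mu_{\mathcal{O}}[\pi_{\mathcal{O}}]\,u_{\mathcal{T}}(z),
$$
where $\mu_{\mathcal{T}}=\Phi(\mu_{c})$. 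Taking $\max_{\mu_{c}}\min_{\mu_{\mathcal{O}'}}$ on the left equals $\max_{\mu_{\mathcal{T}}}\min_{\mu_{\mathcal{O}}}$ on the right because $\Phi$ is a bijection of the feasible sets and zero-sum is preserved. Therefore the NE strategy $\mu^{*}_{c}$ solves the TMECor program, and $\Phi(\mu^{*}_{c})=\sigma^{*}_{\mathcal{T}}$, which in the paper's notation is written $\mu^{*}_{c}=\sigma^{*}_{\mathcal{T}}$.

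The hard part will be rigorously establishing that $\Phi$ is well-defined and surjective onto $\Delta(\times_{p\in\mathcal{T}}\Pi_{p})$ rather than onto some strict subset or superset. The delicate point is the information-set merging: one must verify that the coordinator cannot (accidentally) condition on another team member's private information through a shared information set, and, symmetrically, that every joint team plan is realizable by some coordinator strategy. This relies on showing that the information partition produced by Phase 2 exactly matches the perfect-recall partition of each team member lifted to the joint plan space — a combinatorial check on the structure generated by Algorithm~\ref{algorithm1}. Once that is settled, the remainder of the argument is routine application of Lemmas \ref{lemma1}--\ref{lemma2}, Corollary \ref{corollary1}, and Theorem \ref{theorem1}, together with the minimax theorem for finite zero-sum games.
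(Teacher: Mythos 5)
Your proposal follows the same overall reduction as the paper---use the strategy correspondence (Corollary \ref{corollary1}) and the payoff equivalence (Theorem \ref{theorem1}) to identify the TMECor program of Equation \ref{equation1} with the maxmin problem defining a Nash equilibrium in $\mathcal{G}$---but the mechanics differ and your version is substantially more honest about where the real work lies. The paper substitutes $\mu_c$ for $\sigma_{\mathcal{T}}$ in the objective via Corollary \ref{corollary1} and then closes with a short contradiction argument (a hypothetical $\mu_c'$ with a larger inner-minimization value would contradict the maximality of $\sigma^*_{\mathcal{T}}$); you instead argue directly through the minimax theorem and an explicit bijection $\Phi:\Delta(\Pi_c)\to\Delta(\times_{p\in\mathcal{T}}\Pi_p)$. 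The bijection is exactly the content that the paper's Corollary \ref{corollary1} asserts without proof, and you are right that it is the crux: one must show that the coordinator's mixed strategies realize precisely the \emph{ex ante} correlated joint team strategies---no more (the coordinator must not condition on private information it should not have) and no less (every joint plan must be realizable). Neither you nor the paper actually discharges this obligation, but you at least name it and locate it in the Phase~2 information-set construction. One caution on that point: the paper's merging step places coordinator nodes with the same public history but \emph{different virtual} private information (the unrealized branches of $\mathcal{C}_t$) into the \emph{same} information set, whereas your sketch says distinct private realizations lie in distinct information sets; the correct statement is that only the represented member's \emph{actual} private information separates information sets, and the temporary enumerated alternatives are merged---this is precisely the distinction your ``delicate point'' would have to get right for $\Phi$ to be well-defined. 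Your route buys a constructive identification of $\mu^*_c$ with $\sigma^*_{\mathcal{T}}$ rather than a nonconstructive contradiction, at the price of having to verify surjectivity of $\Phi$; the paper's route hides the same obligation inside an unproven corollary.
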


\begin{proof}
    Assuming that $\sigma^{*}_{\mathcal{T}}$ is a TMECor, then according to Equation \ref{equation1}, we can get:

    $$
    \begin{aligned}
      \sigma^{*}_{\mathcal{T}} \in \arg \max _{\sigma_{\mathcal{T}}} \min _{\sigma_{\mathcal{O}}} \sum_{z \in Z} & \sum_{\substack{p \in \mathcal{T}  \\ \pi_{p} \in \Pi_{p}(z) \\ \pi_{\mathcal{O}} \in \Pi_{\mathcal{O}}(z)}} \sigma_{\mathcal{T}}\left[\pi_{p}\right] \sigma_{\mathcal{O}}\left[\pi_{\mathcal{O}}\right] u_{\mathcal{T}}(z)
    \end{aligned}
    $$
    According to Corollary \ref{corollary1}, the above formula can be converted into 
    $$
    \begin{aligned}
      \mu^{*}_{c} \in \arg \max _{\mu_{c}} \min _{\mu_{\mathcal{O}}} \sum_{z \in Z} & \sum_{\substack{ \pi_{c} \in \Pi_{c}(z) \\ \pi_{\mathcal{O}} \in \Pi_{\mathcal{O}}(z)}} \mu_{c}\left[\pi_{c}\right] \mu_{\mathcal{O}}\left[\pi_{\mathcal{O}}\right] u_{c}(z)
    \end{aligned}
    $$
    $\min _{TME \operatorname{Cor}}\left(\sigma_{\mathcal{T}}\right)$ and $\min _{NE}\left(\mu_c\right)$ denote the inner minimization problem in the definition of TMECor and NE respectively. We assume that existing a $\mu_{c}^{\prime}$ that is larger than the value of $\sigma_{\mathcal{T}}^{*}$, i.e., $\min_{NE}(\mu_{c}^{\prime}) > \min_{NE}(\sigma^{*}_{\mathcal{T}})$. By Theorem \ref{theorem1}, it can be converted that: $\min_{TME}(\mu^{\prime}_{c}) > \min_{TME}(\sigma^{*}_{\mathcal{T}})$.
    This is unreasonable since by hypothesis $\sigma_{\mathcal{T}^{*}}$ is a maximum.
    Hence, we get that:
    $$
    \begin{aligned}
      \mu_{c}^{*} \in \arg \max_{p \in \mathcal{T}} \min_{TME}(\sigma_{T}\left[ \pi_{p} \right])
    \end{aligned}
    $$
    This is the end of the proof.

\end{proof}


\section{experimental evaluation}
In this section, we describe the setup of experimental scenarios and comparison methods. The performance of our method is verified by comparing with the state-of-the-art algorithm in different scenarios.

\subsection{Experimental Setting}

We conduct our experiments with standard testbed in the adversarial team multi-player games (ATMGs). More exactly, we use modified versions of \textit{Kuhn Poker} \cite{kuhn1950simplified} and \textit{Leduc Poker} \cite{DBLP:conf/uai/SoutheyBLPBBR05}. By taking the number of players and cards as parameters and modifying the team's utility function, they are possible to satisfy the definition of ATMGs. We set up scenarios in which a group of players forms a team against a single player and unify the team's utility function to meet the definition of ATMGs. Furthermore, we can achieve the purpose of generating multiple experimental platforms with different complexity by taking the number of players, suits, and cards as parameters. We adopt a total of 12 scenarios of varying difficulty, 6 each for \textit{Kuhn Poker} and \textit{Leduc Poker}, where the default maximum number of bets allowed per betting round is 1. In \textit{Kuhn Poker}, the experiment is set up with players from 3 to 5, and cards are set to 3, 4, 6, and 8. In \textit{Leduc Poker}, players are set from 3 to 5, cards are set by 3, 4, and 6, suits are set 3. It is worth noting that the more complex 5-player scenario has not been attempted before. All experiments are run on a machine with 18-core 2.7GHz CPU and 250GB memory.

We use the \textit{Counterfactual Regret Minimization plus} (CFR+) to interface with the MPTA and the baseline algorithm, which is an excellent approach for finding Nash equilibrium in 2-player zero-sum extensive-form games. Nevertheless, our method can still use other CFR-like algorithms. In this paper, the baseline algorithm is Team-Public-Information Conversion Algorithm (TPICA), which is the previous best method similar to our work. It gives the coordinator information that is common to the whole team and provides all team members with the corresponding actions for each possible private state.

\subsection{Experimental Results}

\paragraph{\textbf{Total runtime for finding a TMECor.}}
In Table \ref{res_table}, we describe in detail the size of the game trees in different scenarios, both the original game tree and the transformed tree through our method. Furthermore, they are compared with the basic method (TPICA) proposed by Carminati et al. \cite{DBLP:conf/icml/CarminatiCC022}. TPICA and our work share the same goal of finding a TMECor in ATMGs using the effective tools of 2p0s games. The data in Table \ref{res_table} shows that our method significantly reduces the size of the game trees generation after the transformation compared to TPICA. In \textit{Kuhn Poker}, the sizes of 21K3, 21K4, and 21K6 are reduced by $9.25 \times$, $431.72 \times$, and $1,476.26 \times$, respectively. The blank cells in Table \ref{res_table} indicate that TPICA fail to convert due to out-of-memory and thus cannot obtain a valid game tree. In the four cases of 21K4, 21K4, 21K6, and 21L33 where both MPTA and TPICA can work, the total time required by our approach to compute a TMECor is 0.76s, 9.26s, 144s, and 240s respectively, which is $182.89 \times$, $168.47 \times$, $694.44 \times$, and $233.98 \times$ faster than TPICA. This shows that our method is effective in reducing the action space, as it improves the solving speed by several orders of magnitude. It is worth noting that MPTA is still available in other large-scale scenarios where TPICA cannot transform original game trees. In particular, 41K6 and 41L33 are 5-player cases that have never been used as experiments by previous algorithms due to their sheer size. In addition, we also observe that the reason for the speed-up is mainly due to the special structure, which greatly reduces the number of adversary nodes and temporary chance nodes.

\begin{figure}[t]
\centering
\subfigure[21K3]{
\label{res1_1}
\includegraphics[width=0.231\textwidth]{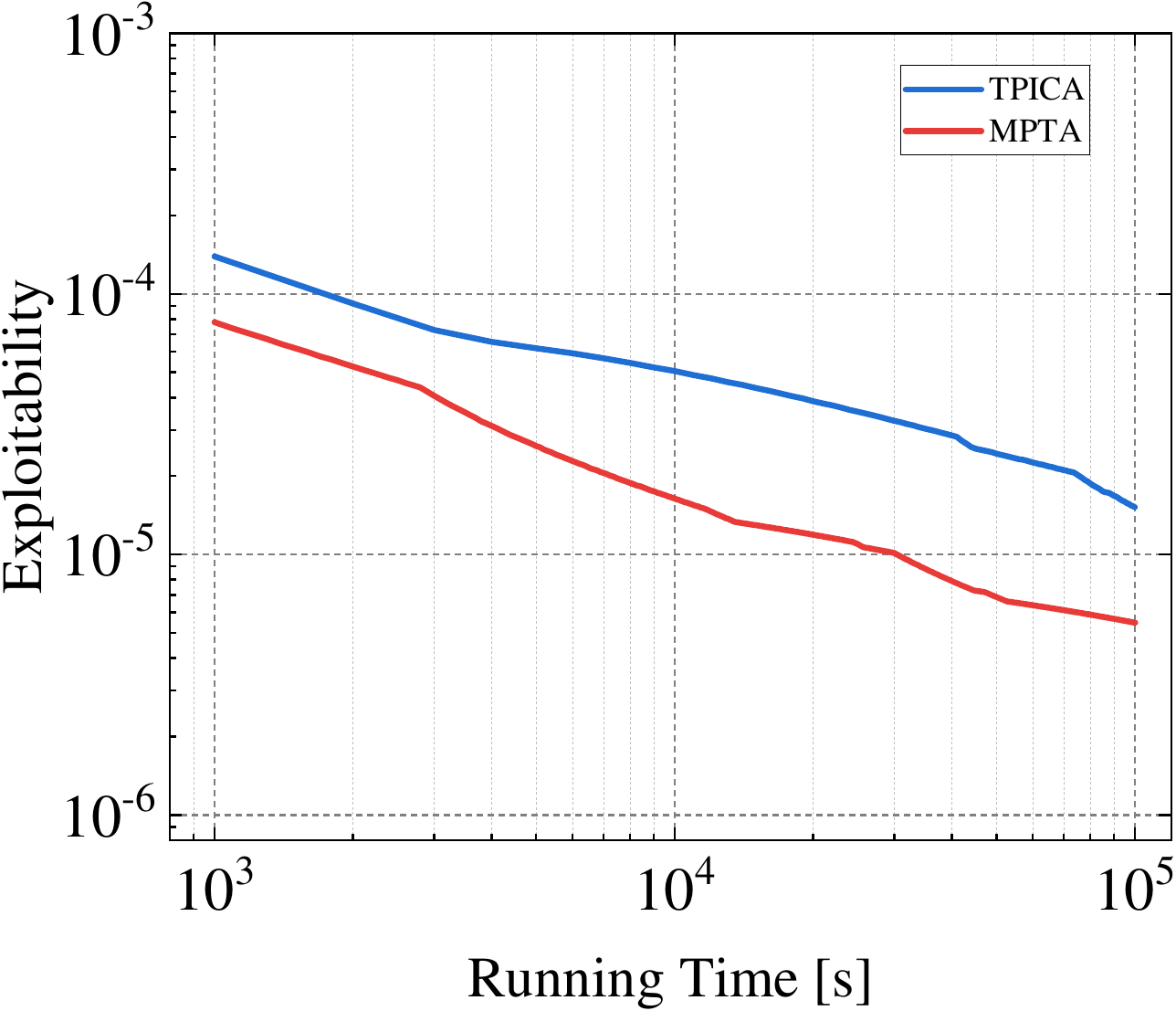}}
\subfigure[21L33]{
\label{res1_2}
\includegraphics[width=0.231\textwidth]{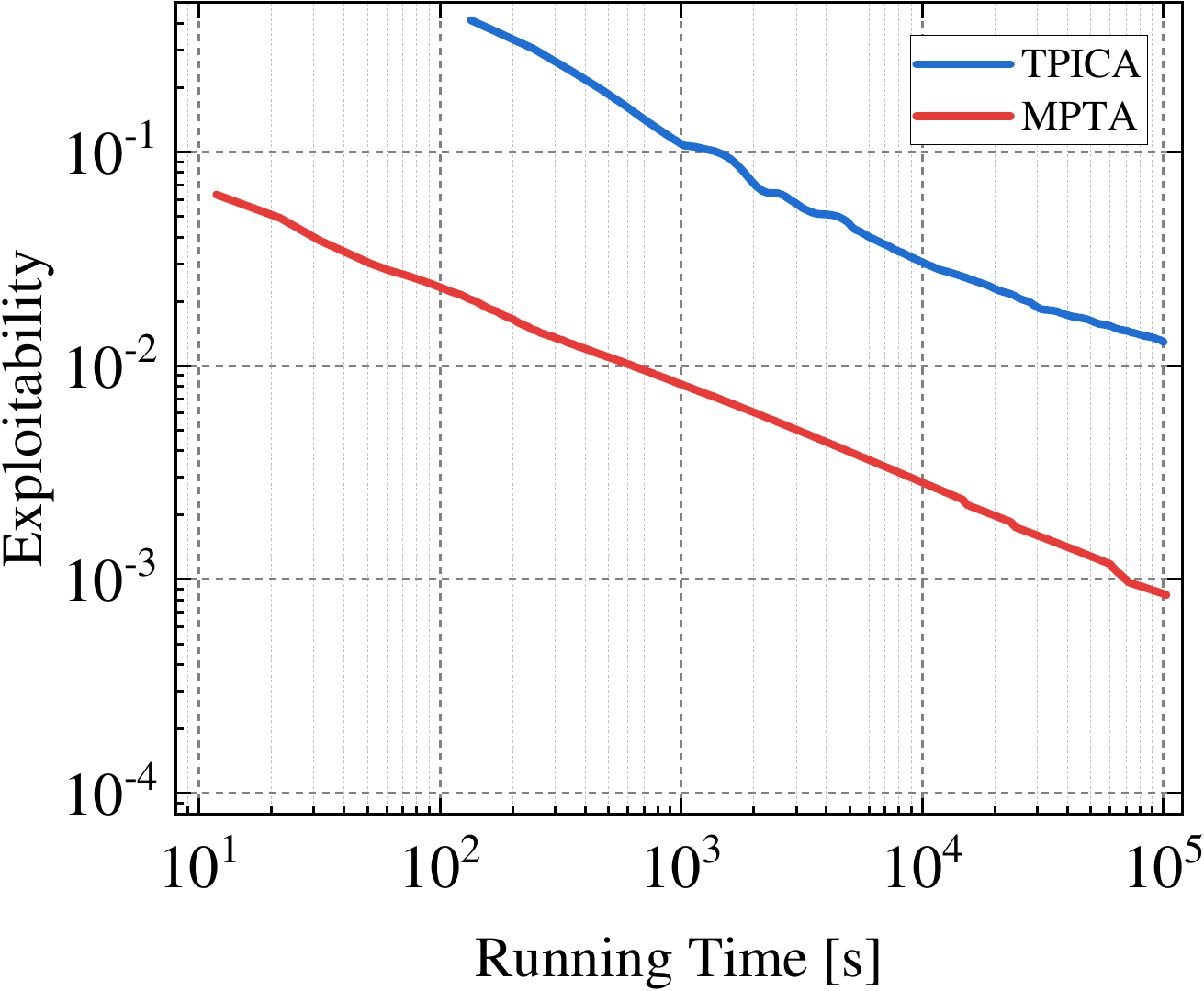}}
\subfigure[21K6]{
\label{res1_3}
\includegraphics[width=0.231\textwidth]{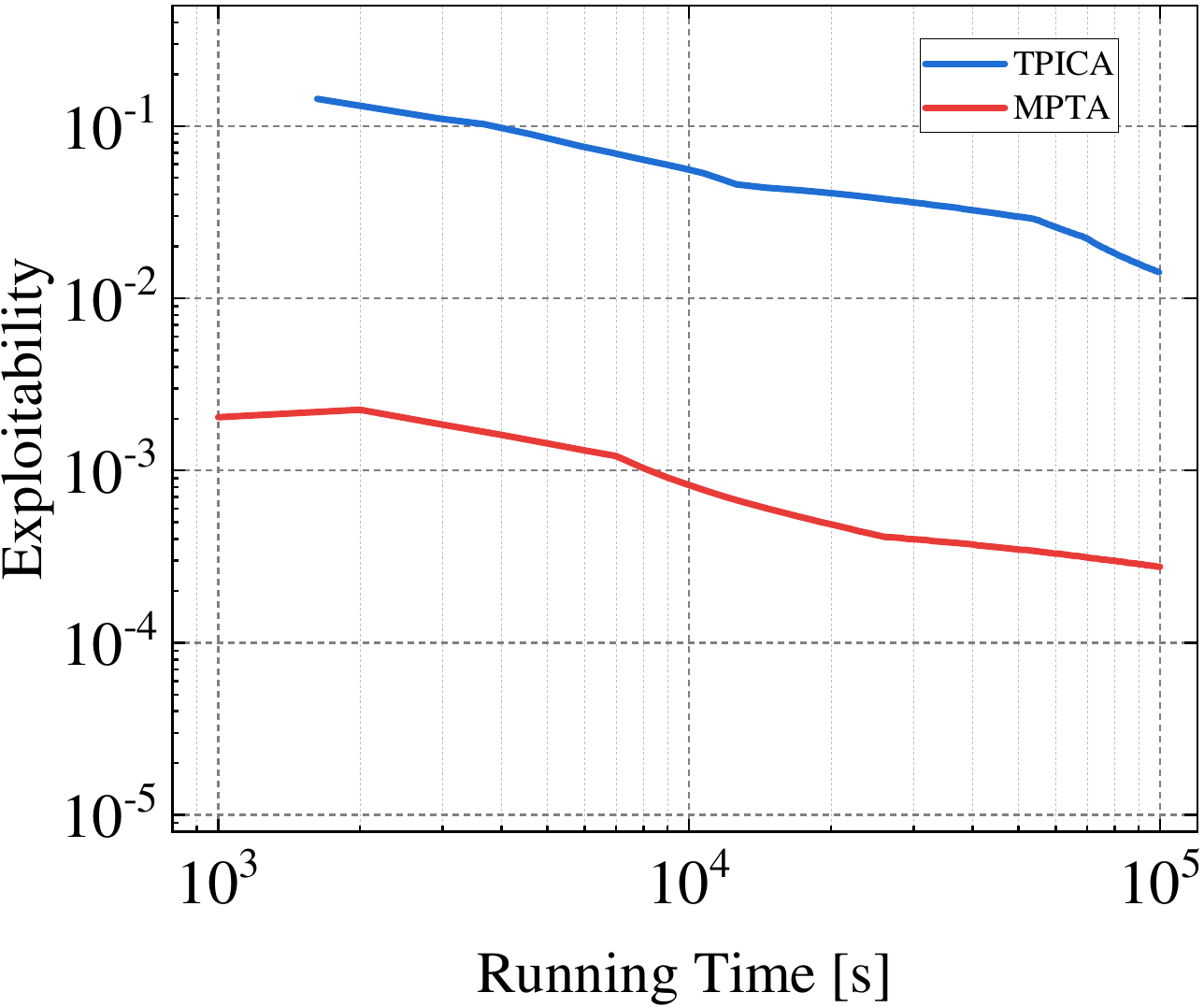}}
\subfigure[31L43]{
\label{res1_4}
\includegraphics[width=0.231\textwidth]{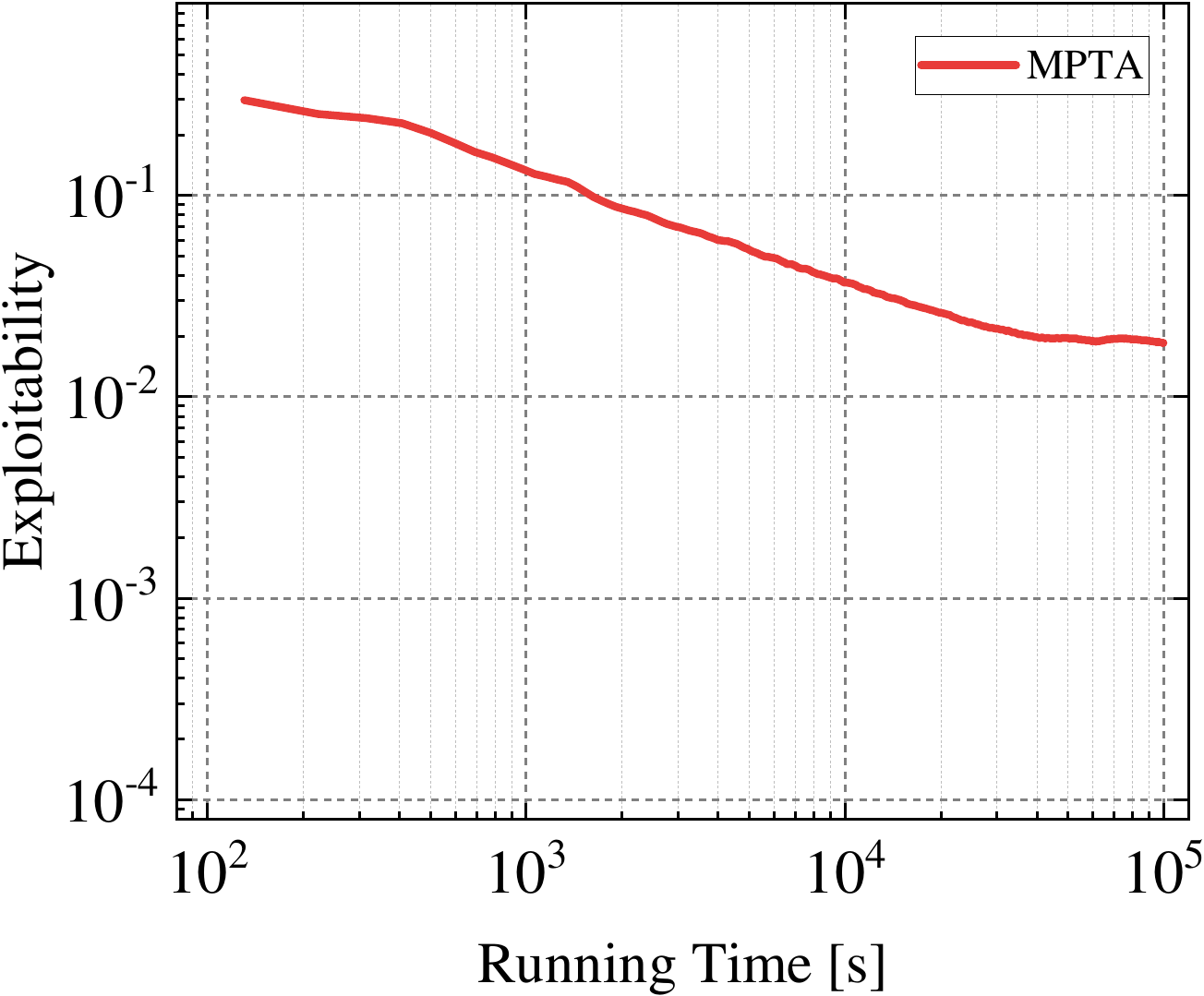}}
\subfigure[41K6]{
\label{res1_5}
\includegraphics[width=0.231\textwidth]{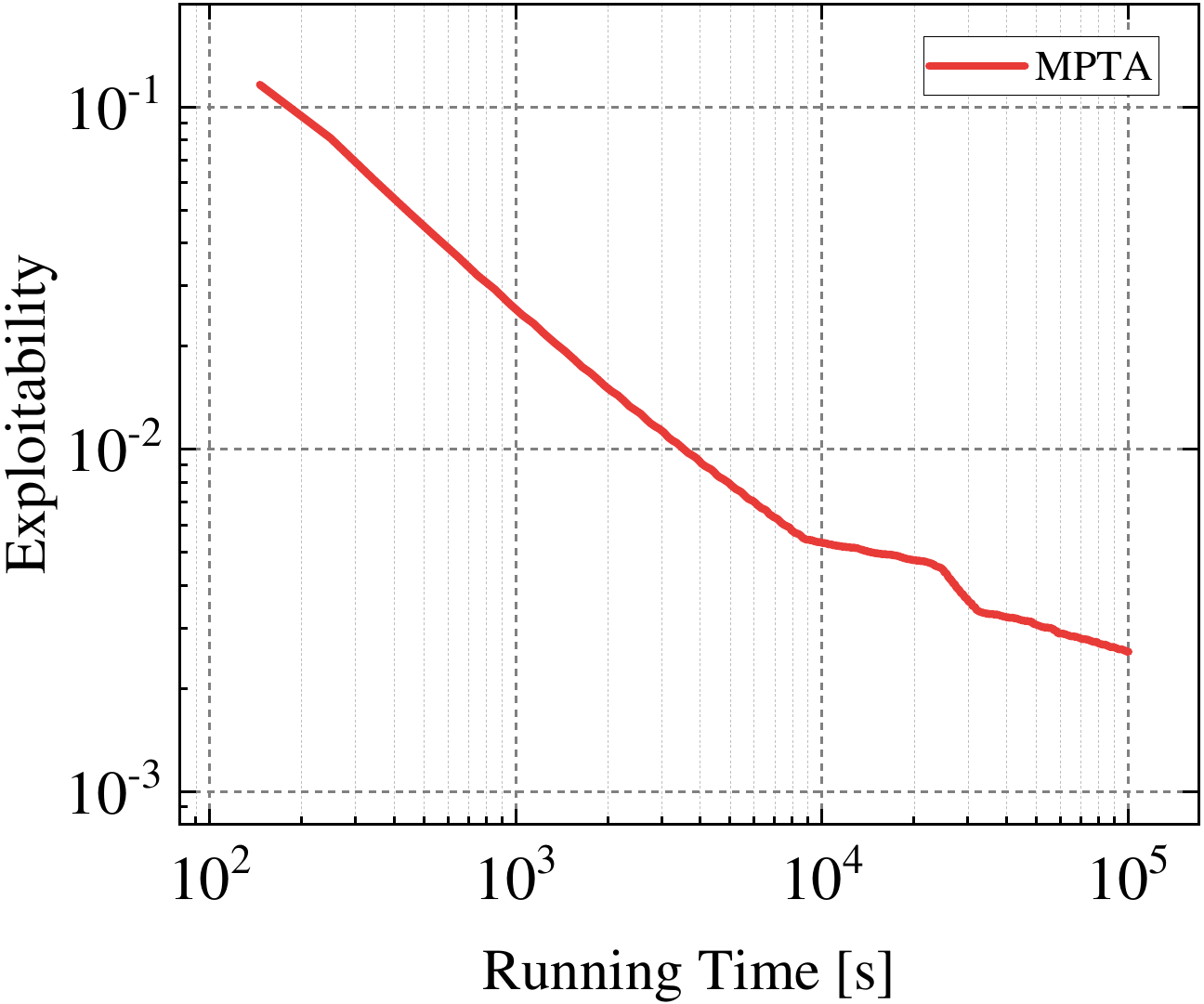}}
\subfigure[41L33]{
\label{res1_6}
\includegraphics[width=0.231\textwidth]{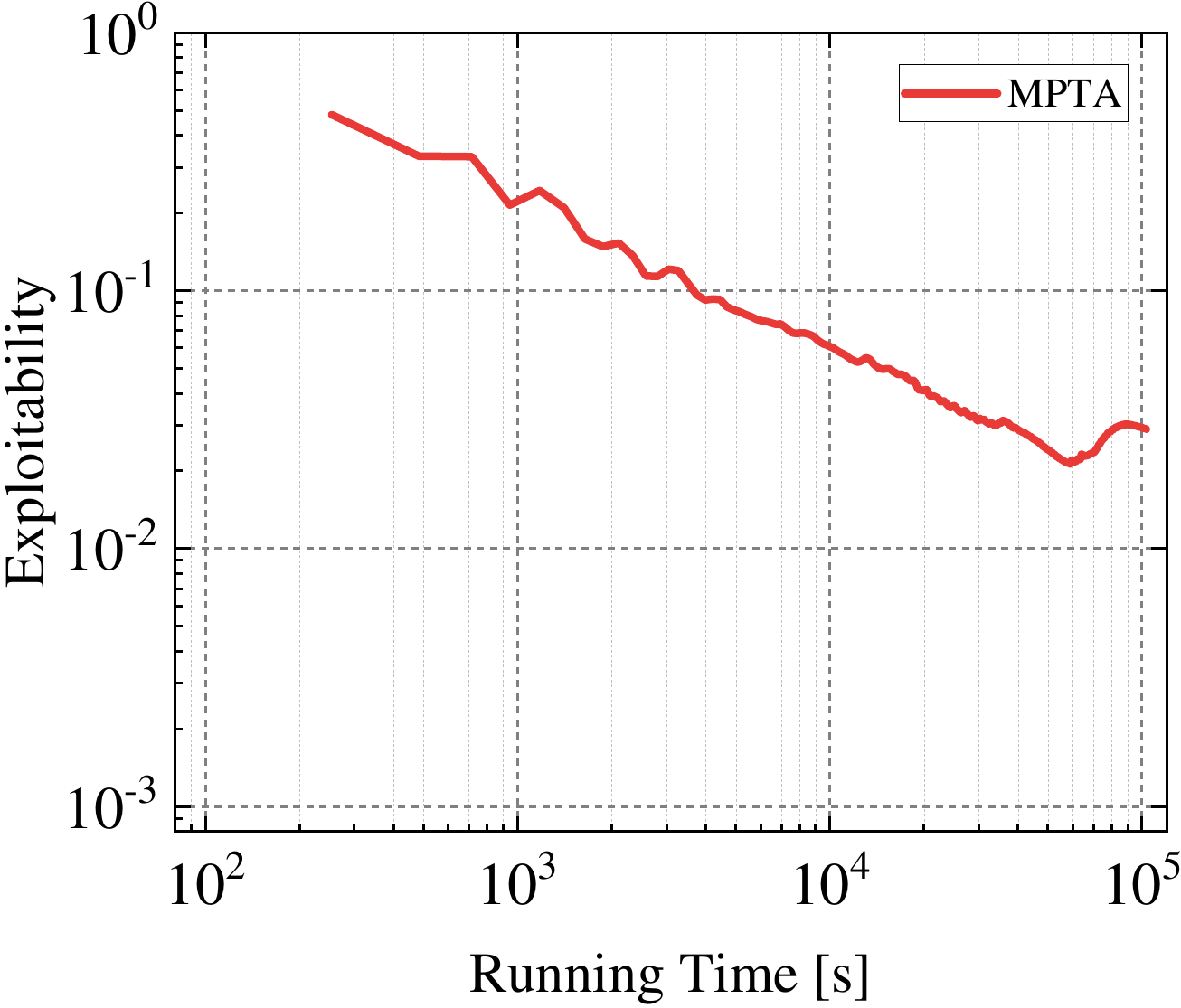}}

\caption{The comparison of exploitability in the same running time. All experiments run for 100,000 seconds. In 41K6, 31L43, and 41L33, TPICA failed to transform due to out-of-memory.}  
\label{res1}
\end{figure}

\begin{table*}[htbp]
  \centering
  \caption{The size of the game trees returned by the original game, transformed game by our method and by the basic method by Carminati et al. \cite{DBLP:conf/icml/CarminatiCC022}. under the different complexity scenarios; the runtime of finding a TMECor by our method and the baseline algorithm in different games. The last column shows exactly the improvement of our method compared to the baseline.\textbf{(The last column shows exactly how much faster our method is compared to baseline.)} $mn$K$r$ is \textit{Kuhn Poker}, where $m$ is the number of team members, $n$ is the number of adversary and $r$ is the number of cards. $mn$L$rc$ is \textit{Leduc Poker}, where $m$ is the number of team members, $n$ is the number of adversary, $r$ is the number of cards and $c$ is the indistinguishable suits.}
    \begin{tabular}{lrrrrrrrrrr}
    \toprule
    \multicolumn{1}{c}{\multirow{2}[2]{*}{Game instances}} & \multicolumn{3}{c}{Total nodes} & \multicolumn{2}{c}{Team nodes} & \multicolumn{2}{c}{Adversary nodes} & \multicolumn{2}{c}{Runtime} & \multicolumn{1}{c}{\multirow{2}[2]{*}{Improvements}} \\
          & \multicolumn{1}{c}{Original} & \multicolumn{1}{c}{TPICA} & \multicolumn{1}{c|}{\textbf{MPTA}} & \multicolumn{1}{c}{TPICA} & \multicolumn{1}{c|}{\textbf{MPTA}} & \multicolumn{1}{c}{TPICA} & \multicolumn{1}{c|}{\textbf{MPTA}} & \multicolumn{1}{c}{TPICA} & \multicolumn{1}{c}{\textbf{MPTA}} &  \\
    \midrule
    21K3  & 151   & 5,395 & 583   & 300   & 144   & 294   & 72    & 139s  & \textbf{0.76s} & \textbf{182.89$\times$} \\
    21K4  & 601   & 1,337,051 & 3,097 & 3,888 & 768   & 4,632 & 384   & 1560s & \textbf{9.26s} & \textbf{168.47$\times$} \\
    21K6  & 3,001 & 34,191,721 & 23,161 & 261,360 & 5,760 & 368,760 & 2,880 & >27h  & \textbf{144s} & \textbf{694.44$\times$}  \\
    31K6  & 23,401 &       & 271,441 &       & 75,240 &       & 22,680 &       & \textbf{825s} &  \\
    31K8  & 109,201 &       & 1,713,601 &       & 475,440 &       & 142,800 &       & \textbf{5,093s} &  \\
    41K6  & 115,921 &       & 1,796,401 &       & 528,480 &       & 105,120 &       & \textbf{3,051s} &  \\
    21L33 & 13,183 & 10,777,963 & 57,799 & 614,172 & 14,664 & 475,566 & 6,864 & 56,156s & \textbf{240s} & \textbf{233.98$\times$}  \\
    21L43 & 42,589 &       & 251,749 &       & 64,008 &       & 29,736 &       & \textbf{3,006s} &  \\
    21L63 & 218,011 &       & 1,954,351 &       & 497,940 &       & 229,620 &       & \textbf{9,024s} &  \\
    31L33 & 161,491 &       & 948,151 &       & 262,500 &       & 80,220 &       & \textbf{4,014s} &  \\
    31L43 & 738,241 &       & 5,994,241 &       & 1,661,760 &       & 504,000 &       & \textbf{137,817s} &  \\
    41L33 & 1,673,311 &       & 12,226,231 &       & 3,535,320 &       & 809,880 &       & \textbf{143,475s} &  \\
    \bottomrule
    \end{tabular}%
  \label{res_table}%
\end{table*}%

\paragraph{\textbf{Solving efficiency.}}
Exploitability is widely used as a significant evaluation criterion for a strategy profile. Informally, it represents the gap between the current policy and the optimal policy. Thus, a smaller exploitability indicates that the current strategy is closer to the TMECor in the original multi-player games. In addition to the comparison of the total solution time, we also need insight into the relationship between the solving efficiency of the transformed game trees and algorithms in the running process. For this purpose, we select three cases each in \textit{Kuhn Poker} and \textit{Leduc Poker} to test MPTA and baseline method's variation of exploitability over time within a limited running time of 100,000 seconds, as shown in Figure \ref{res1}. In the three comparable cases: 21K3, 21K6, and 21L33, we observe that the curve representing MPTA is always below the TPICA, and the distance between the two curves increases with the increase of the size gap of game trees. This suggests that the CFR+ is more efficient in solving the transformed game tree for MPTA. In the remaining scenarios, we can find that the convergence rate to equilibrium is faster in 41K6 than in 31L43 and 41L33 and the curve representing the MPTA fluctuates more sharply in 41L33. This indicates that the process of computing equilibrium is more difficult when the game scale increases.

\balance

\paragraph{\textbf{Execution efficiency.}}
Finding a TMECor is an iterative process, and we show a comparison of the time taken by the algorithms within the same number of iteration rounds in Figure \ref{res2}. As can be seen from Figure \ref{res2}, in all comparing circumstances, the game trees transformed by MPTA take considerably less time to complete the computation of a normal-form strategy profile than by TPICA in any number of iteration rounds. 21K6 and 21L33 are an order of magnitude larger than 21K3 and 21K4. However, the results in Figure \ref{res2_3} and Figure \ref{res2_4} show that the advantages of our approach are more evident in these two large-scale games.

\begin{figure}[htbp]
\centering
\subfigure[21K3]{
\label{res2_1}
\includegraphics[width=0.231\textwidth]{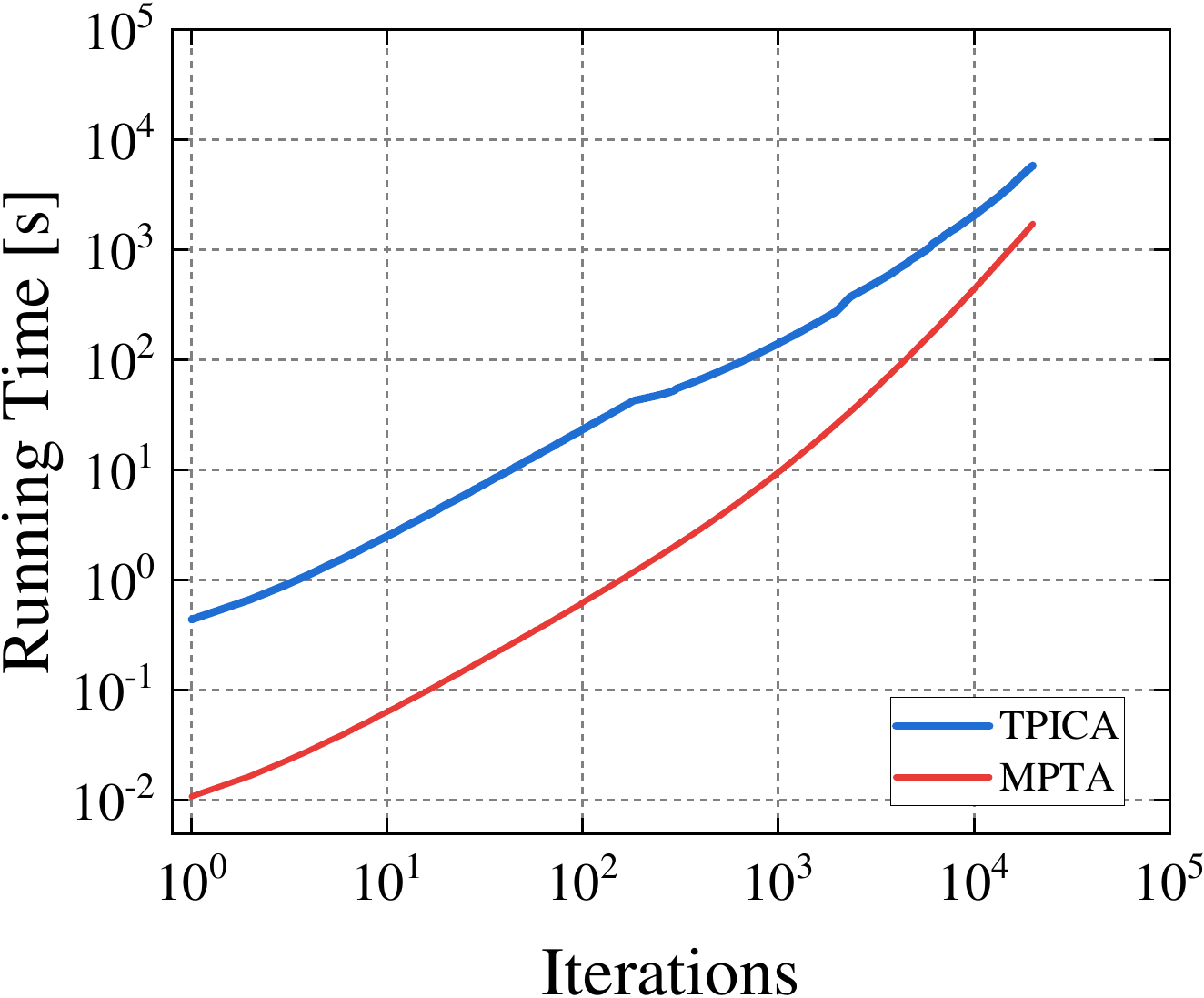}}
\subfigure[21K4]{
\label{res2_2}
\includegraphics[width=0.231\textwidth]{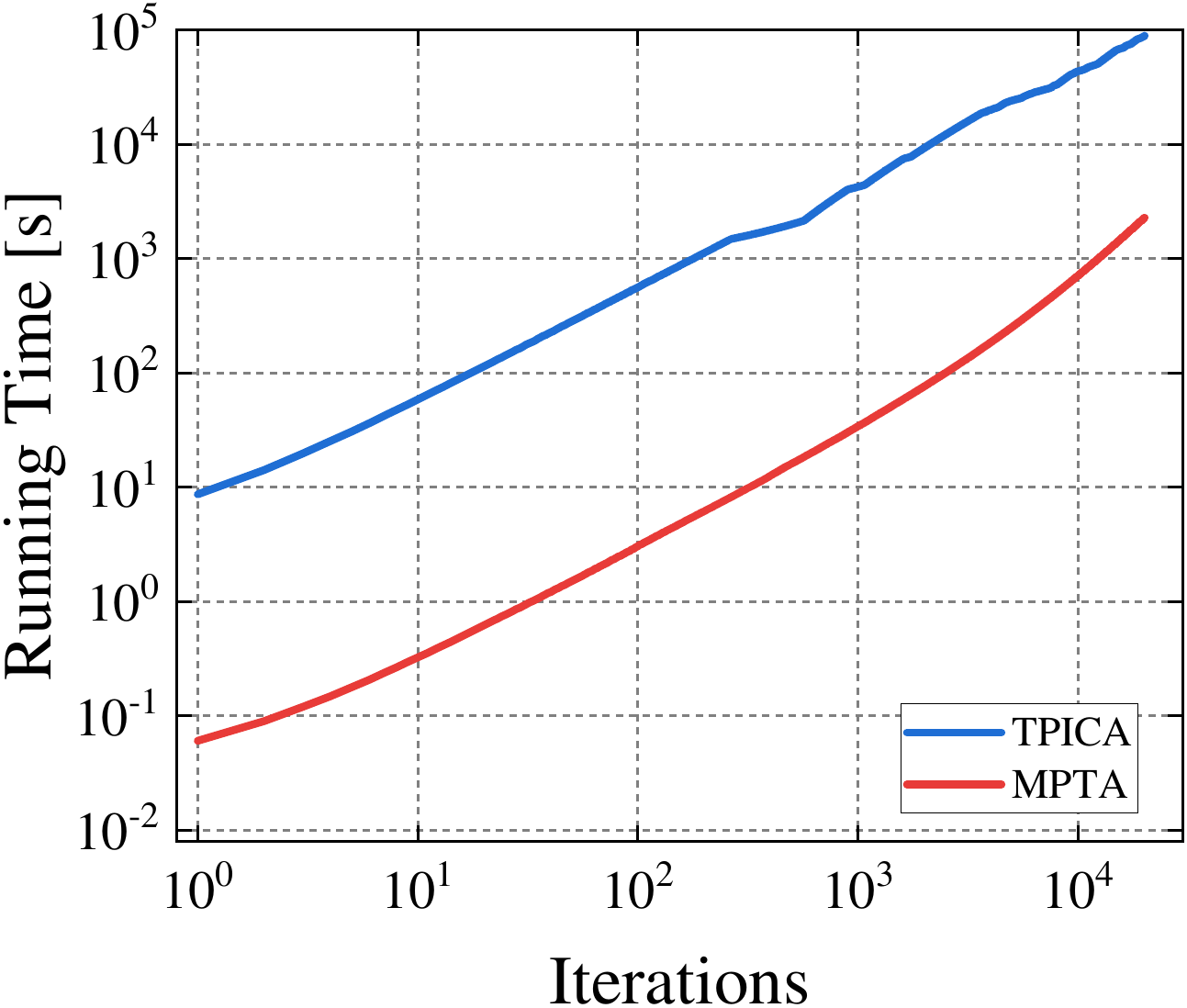}}
\subfigure[21K6]{
\label{res2_3}
\includegraphics[width=0.231\textwidth]{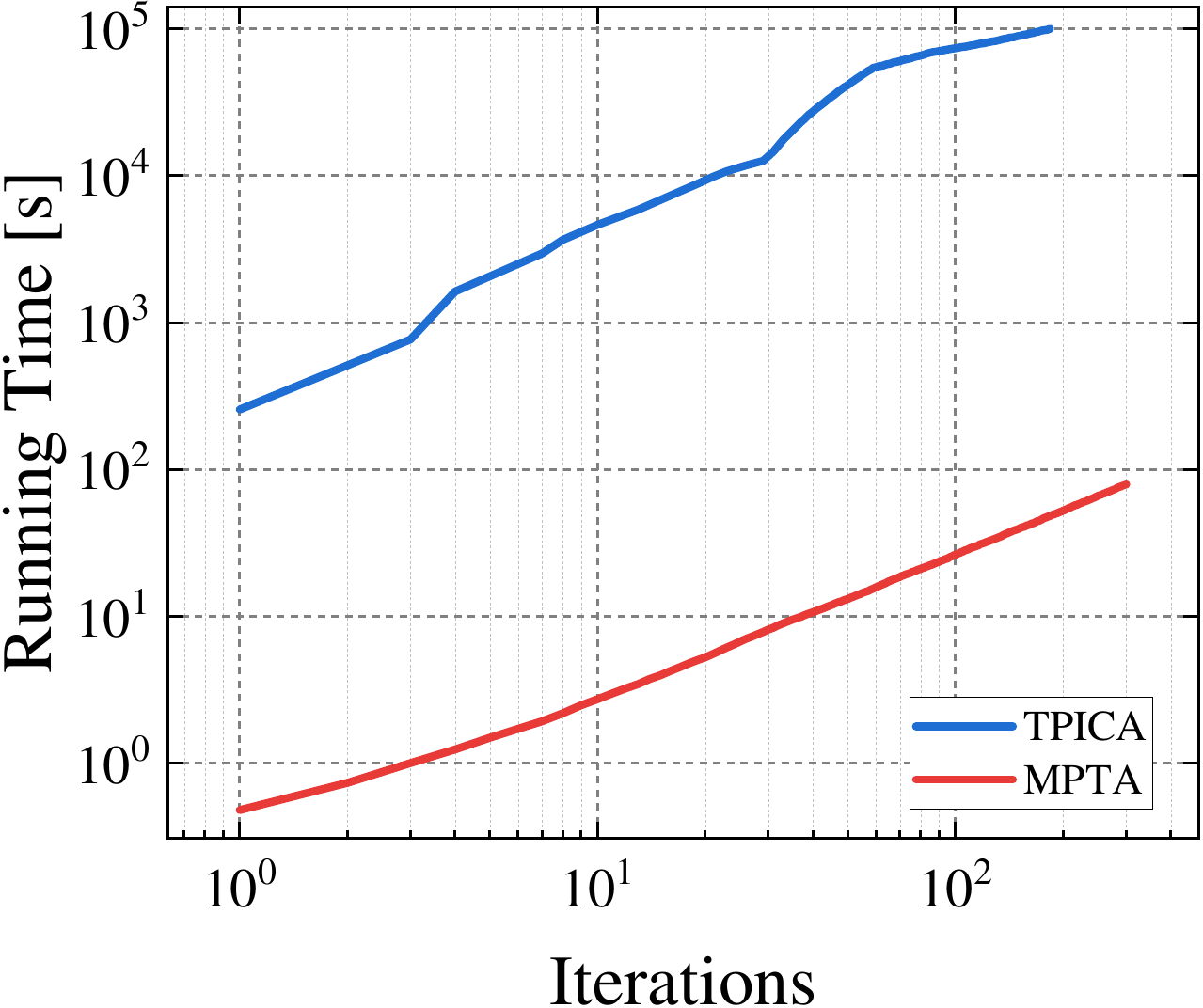}}
\subfigure[21L33]{
\label{res2_4}
\includegraphics[width=0.231\textwidth]{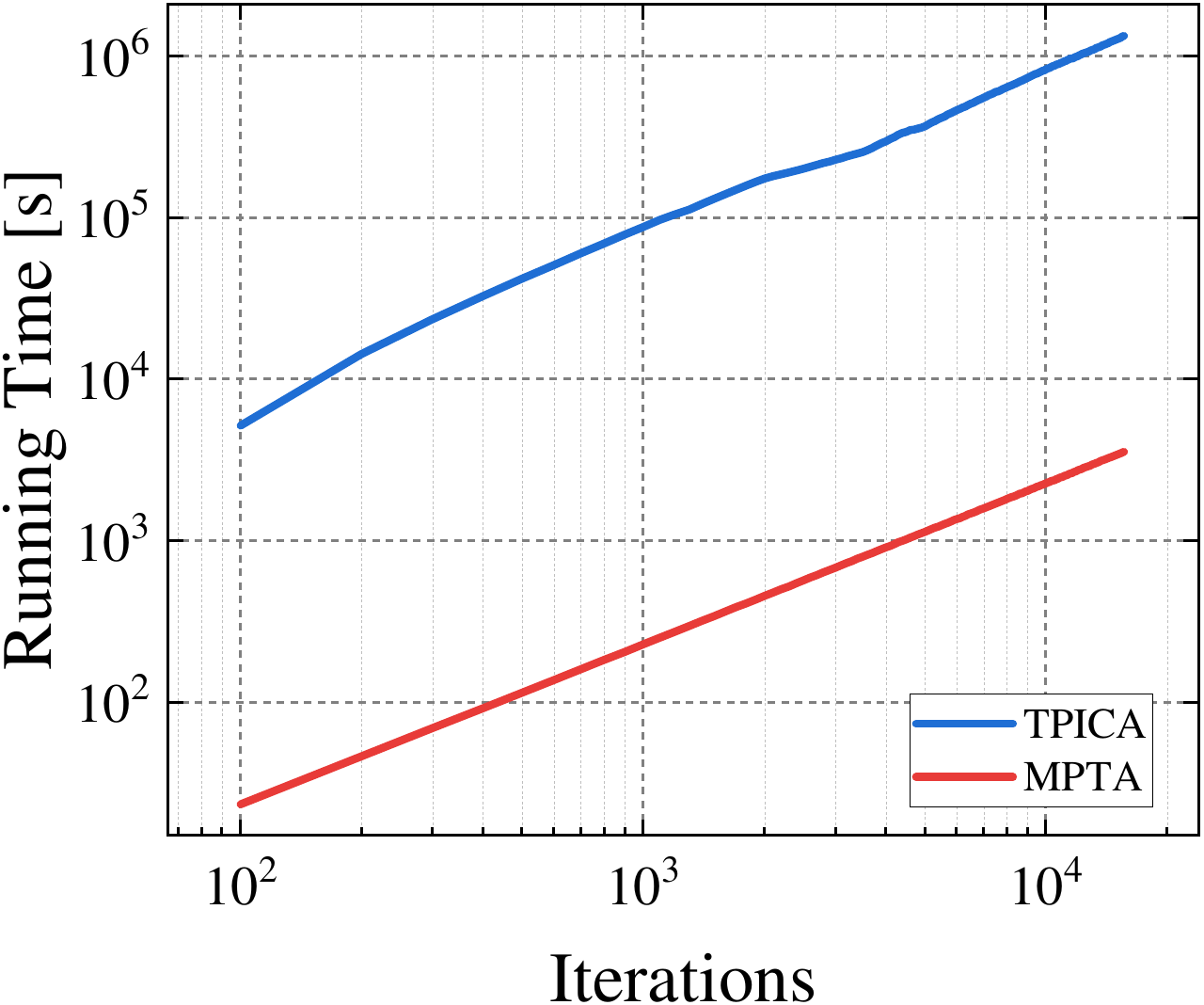}}
\caption{The comparison of the time taken by the algorithms within the same number of iterations. All schemes except for 21K6 have been iterated for 20,000 rounds, as the TPICA is too time-consuming to run more rounds.}  
\label{res2}
\end{figure}


\section{conclusions and future work}
In this paper, we present a generic multi-player transformation algorithm (MPTA) which can transform a multi-player game tree satisfying the definition of ATMGs into a 2-player game tree, thereby establishing a bridge between 2p0s games and multi-player games. In addition, we analyze theoretically that the proposed new structure limits the growth of the transformed game's action space from exponential to a constant level. At the same time, we also prove the equilibrium equivalence between the original game tree and the transformed game tree, which provides a theoretical guarantee for our work. We experiment with several scenarios of varying complexity and show that finding a TMECor is several orders of magnitude faster than the state-of-the-art baseline. As far as we know, this work is the first one to solve a ATMG with 5 or even more players.

In the future, we will be devoted to testing our algorithm in more challenging and complex scenarios in real world. We also plan to  get the aid of deep neural networks' powerful data processing and generalization capabilities to  provide real-time strategies for agents.






\bibliographystyle{ACM-Reference-Format} 
\bibliography{mybibfile}


\end{document}